\begin{document}
\newtheorem{theorem}{Theorem}[section]
\newtheorem{lemma}[theorem]{Lemma}
\newtheorem{definition}[theorem]{Definition}
\newtheorem{conjecture}[theorem]{Conjecture}
\newtheorem{proposition}[theorem]{Proposition}
\newtheorem{algorithm}[theorem]{Algorithm}
\newtheorem{corollary}[theorem]{Corollary}
\newtheorem{observation}[theorem]{Observation}
\newtheorem{claim}[theorem]{Claim}
\newtheorem{problem}[theorem]{Open Problem}
\newtheorem{remark}[theorem]{Remark}
\newcommand{\noin}{\noindent}
\newcommand{\ind}{\indent}
\newcommand{\om}{\omega}
\newcommand{\I}{\mathcal I}
\newcommand{\N}{{\mathbb N}}
\newcommand{\Z}{{\mathbb Z}}
\newcommand{\LL}{\mathbb{L}}
\newcommand{\R}{{\mathbb R}}
\newcommand{\E}[1]{\mathbb{E}\left[#1 \right]}
\newcommand{\V}{\mathbb Var}
\newcommand{\Prob}{\mathbb{P}}
\newcommand{\eps}{\varepsilon}
\newcommand{\bsigma}{{\boldsymbol\sigma}}
\newcommand{\bm}{{\boldsymbol m}}
\newcommand{\bu}{{\boldsymbol u}}
\newcommand{\bv}{{\boldsymbol v}}
\newcommand{\tU}{{\mathtt U}}
\newcommand{\tD}{{\mathtt D}}
\newcommand{\tL}{{\mathtt L}}
\newcommand{\tR}{{\mathtt R}}

\newcommand{\Tv}{P}

\newcommand{\mT}{\mathcal{T}}
\newcommand{\mS}{\mathcal{S}}
\newcommand{\mA}{\mathcal{A}}
\newcommand{\mB}{\mathcal{B}}

\newcommand{\Cyc}[1]{\mathrm{Cyc}\left(#1\right)}
\newcommand{\Seq}[1]{\mathrm{Seq}\left(#1\right)}
\newcommand{\Mul}[1]{\mathrm{Mul}\left(#1\right)}
\newcommand{\Mulo}[1]{\mathrm{Mul}_{>0}\left(#1\right)}
\newcommand{\Set}[1]{\mathrm{Set}\left(#1\right)}
\newcommand{\Setd}[1]{\mathrm{Set}_{d}\left(#1\right)}
\newcommand{\Bin}{\mathrm{Bin}}

\title{A probabilistic version of the game of Zombies and Survivors on graphs}
\author{Anthony Bonato}
\address{Department of Mathematics, Ryerson University, Toronto, ON, Canada}
\email{abonato@ryerson.ca}
\author{Dieter Mitsche}
\address{Universit\'{e} de Nice Sophia-Antipolis, Nice, France}
\email{dmitsche@unice.fr}
\author{Xavier P\'{e}rez-Gim\'{e}nez}
\address{Department of Mathematics, Ryerson University, Toronto, ON, Canada}
\email{xperez@ryerson.ca}
\author{Pawe\l{} Pra\l{}at}
\address{Department of Mathematics, Ryerson University, Toronto, ON, Canada}
\email{\tt pralat@ryerson.ca}

\keywords{Cops and Robbers} \subjclass[2000]{05C57, 05C80}
\thanks{The authors gratefully acknowledge support from NSERC}

\begin{abstract}
We consider a new probabilistic graph searching game played on graphs, inspired by the familiar game of Cops and Robbers. In Zombies and Survivors, a set of zombies attempts to eat a lone survivor loose on a given graph. The zombies randomly choose their initial location, and during the course of the game, move directly toward the survivor. At each round, they move to the neighbouring vertex that minimizes the distance to the survivor; if there is more than one such vertex, then they choose one uniformly at random. The survivor attempts to escape from the zombies by moving to a neighbouring vertex or staying on his current vertex. The zombies win if eventually one of them eats the survivor by landing on their vertex; otherwise, the survivor wins. The zombie number of a graph is the minimum number of zombies needed to play such that the probability that they win is strictly greater than 1/2. We present asymptotic results for the zombie numbers of several graph families, such as cycles, hypercubes, incidence graphs of projective planes, and Cartesian and toroidal grids.
\end{abstract}

\maketitle

\section{Introduction}

A number of variants of the popular graph searching game Cops and Robbers have been studied. For example, we may allow a cop to capture the robber from a distance $k$, where $k$ is a non-negative
integer~\cite{bonato5,bonato4}, play on edges~\cite{pawel}, allow the robber to capture a cop~\cite{bonato0}, allow one or both players to move with different speeds~\cite{NogaAbbas,fkl} or to
teleport, allow the robber to capture the cops~\cite{bonato0}, have the cops move one at a time~\cite{bbkp,bbkp2,oo}, have the cops play on edges and the robber on vertices~\cite{km, contain}, or make the robber invisible or drunk~\cite{drunk1,drunk2}. For additional background on Cops and Robbers and its variants, see the
book~\cite{bonato} and the surveys~\cite{bonato1,bonato2,bonato3}.

\bigskip

For a given connected graph $G$ and given $k \in \N$, we consider the following probabilistic variant of Cops and Robbers, which is played over a series of discrete time-steps. In the game of \emph{Zombies and
Survivors}, suppose that $k$ \emph{zombies} (akin to the cops) start the game on random vertices of $G$; each zombie, independently, selects a vertex uniformly at
random to start with. Then the \emph{survivor} (akin to the robber) occupies some vertex of $G$. As zombies have limited intelligence, in each round, a given zombie moves towards the survivor along a
shortest path connecting them. In particular, the zombie decreases the distance from its vertex to the survivor's. If there is more than one neighbour of a given zombie that is closer to the survivor
than the zombie is, then they move to one of these chosen uniformly at random. Each zombie moves independently of all other zombies. As in Cops and Robbers, the survivor may move to another
neighbouring vertex, or \emph{pass} and not move. The zombies win if one or more of them \emph{eat} the survivor; that is, land on the vertex which the survivor currently occupies. The survivor, as survivors should do in the event of a zombie attack, attempts to survive by applying an optimal strategy; that is, a strategy that minimizes the probability of being captured. Note that there is no strategy for the zombies; they merely move on
geodesics towards the survivor in each round. Note that since zombies always move toward the survivor, he can pass at most $D$ times, where $D$ is a diameter of $G$, before being eaten by some zombie.
Note also that the game can be extended to the case of $G$ being disconnected, by having zombies that lie in connected components of $G$ different from that of the survivor simply follow a random walk. Nevertheless, in this paper we will only consider connected graphs. We note also that our probabilistic version of Zombies and Survivors was inspired by a deterministic version of this game (with similar rules, but the zombies may choose their initial positions, and also choose which shortest path to the survivor they will move on) first considered in~\cite{hm}.

\bigskip

Let $s_k(G)$ be the probability that the survivor wins the game, provided that he follows the optimal strategy. Clearly, $s_k(G)=1$ for $k < c(G)$, where $c(G)$ is the cop number of $G$.
On the other hand, $s_k(G) < 1$ for any $k \ge c(G)$, since with positive probability the zombies may follow an optimal cop strategy. Usually, $s_k(G) > 0$ for any $k \ge c(G)$; however, there are some examples of graphs for which $s_k(G) = 0$ for every $k \ge c(G)$ (consider, for examples, trees).
Further, note that $s_k(G)$ is a non-decreasing function of $k$ (that is, for every $k \ge 1$, $s_{k+1}(G) \le s_k(G)$), and $s_k(G) \to 0$ as $k\to \infty$. The latter limit follows since the probability that each vertex is initially occupied by at least one zombie tends to 1 as $k \to \infty$.

Define the \emph{zombie number} of a graph $G$ by
$$
z(G) = \min \{ k \ge c(G) : s_k(G) \le 1/2 \} .
$$
This parameter is well defined since $\lim_{k\to\infty}s_k(G)=0$. In other words, $z(G)$ is the minimum number of zombies such that the probability that they eat the survivor is strictly greater than 1/2. The ratio $Z(G) = z(G)/c(G) \ge 1$ is the \emph{cost of being undead}.  Note that there are examples of families of graphs for which there is no cost of being undead; that is, $Z(G)=1$ (as is the case if $G$ is a tree).

On the other hand, $Z(G)$ can be of order as large as the number of vertices, as the following example shows. Let $G$ be a graph consisting of a $5$-cycle with vertices $v_i,$ where $1\le i \le 5$,
and $n-5$ leaves attached to $v_1$, as shown in Figure~\ref{fig:large}. Although two cops suffice to capture the robber in this graph, many more zombies are needed to eat the survivor. To see this, suppose
that the game is played against $k$ zombies. With probability
$$
\left( 1- \frac{5}{n} \right)^{k} = \exp \left( - \frac {5k}{n} + O \left( \frac {k}{n^2} \right) \right) > \frac 12
$$
all zombies start outside the cycle, provided that, say, $k = \frac {\log 2}{5} \ n - \sqrt{n}$ and $n$ is large enough. The survivor chooses to start at $v_2$, and then all zombies immediately move to $v_1$. The survivor
continues to walk around the cycle, and all zombies chase him without eating him forever. On the other hand, if $k = \frac {\log 2}{5} \ n + \sqrt{n}$, then with probability more than $1/2$, at least one
zombie starts on the cycle and at least one starts on the leaves. Conditioning on this event, the survivor loses with probability 1. We obtain that $z(G) \sim \frac {\log 2}{5} \ n$ (where $f(n) \sim g(n)$ denotes that $\lim_{n \to \infty} f(n)/g(n) = 1$), and so $Z(G) \sim
\frac {\log 2}{10} \ n$.
\begin{figure} [h!]
\begin{center}
\epsfig{figure=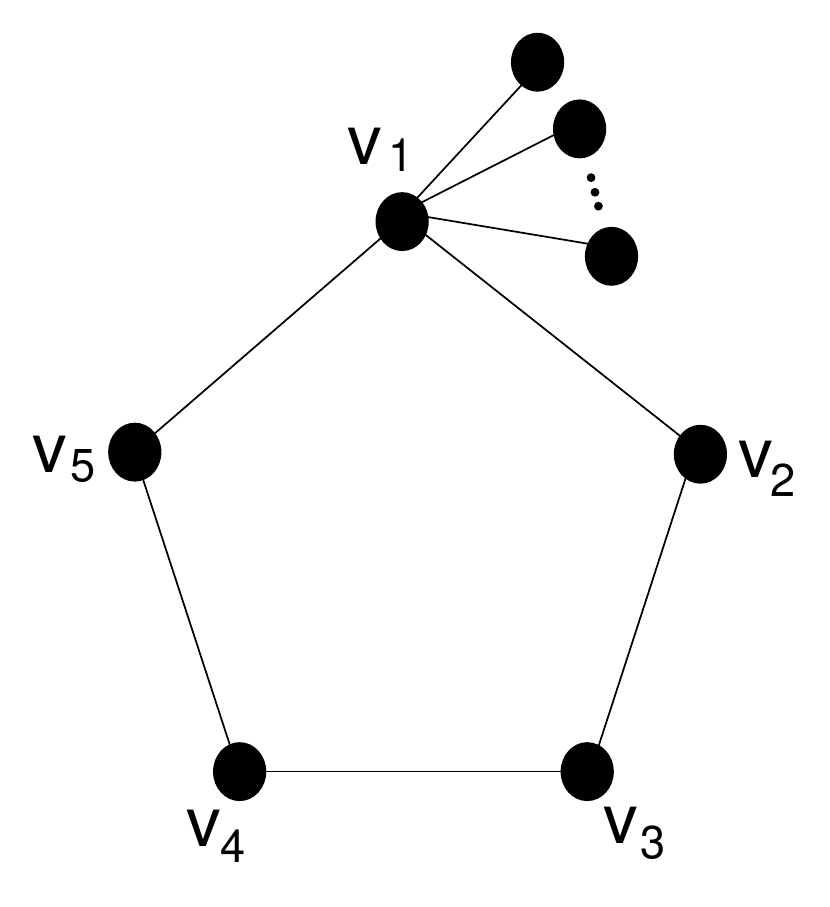,scale=0.6}
\caption{Example of a graph with a large cost of being undead. }\label{fig:large}
\end{center}
\end{figure}

\bigskip

The paper is organized as follows. In Section~\ref{seccycle}, we begin our discussion with cycle graphs.  Theorem~\ref{cycle} gives the asymptotic value of the zombie number of cycles. In
Section~\ref{secproj}, we consider the zombie number of the incidence graphs of projective planes. By using double exposure and coupon collector problems, we show in Theorem~\ref{proj} that about two
times more zombies are needed to eat the survivor than cops. We consider hypercubes $Q_n$ in Section~\ref{sechyper}, and show in Theorem~\ref{hyper} that $z(Q_n) \sim \frac23
n$, as $n \to \infty$. The final section considers both Cartesian grids and grids formed by products of cycles (so called \emph{toroidal grids}). In toroidal grids, we prove in
Theorem~\ref{thm:torus} a lower bound for the zombie number of $\sqrt n/(\omega\log n)$, where $\omega = \omega(n)$ is going to infinity as $n \to \infty$. The proof relies on the careful analysis of a
strategy for the survivor.

\bigskip

Throughout, we will use the following version of \emph{Chernoff's bound}. For more details, see, for example,~\cite{JLR}. Suppose that $X \in \Bin(n,p)$ is a binomial random variable with expectation
$\mu=np$. If $0<\delta<1$, then
$$
\Prob [X < (1-\delta)\mu] \le \exp \left( -\frac{\delta^2 \mu}{2} \right),
$$
and if $\delta > 0$,
\[\Prob [ X > (1+\delta)\mu] \le \exp\left(-\frac{\delta^2 \mu}{2+\delta}\right).\]

The above bounds show that with high probability $X$ cannot be too far away from its expectation. However, it is also true that with high probability $X$ cannot be too close to $\E{X}$. We will use this fact only for the $p=1/2$ case. Let $X \in \Bin(n,1/2)$. First, let us use Stirling's formula ($k! \sim \sqrt{2\pi k} (k/e)^k$) and observe that for each $t$ such that $0 \le t \le n$ we have
$$
\Prob [ X = t] \le \Prob \Big[ X = \left \lfloor n/2 \right \rfloor \Big] = \frac {{ n \choose \lfloor n/2 \rfloor }} {2^n} \sim \sqrt{\frac {2}{\pi n} } < \frac {1}{\sqrt{n}}.
$$
Hence, for each $\eps>0$ there exists $c = c(\eps) > 0$ such that
\begin{equation}\label{eq:Bin_lower}
\Prob \Big[ |X - n/2| < c \sqrt{n} \Big] < \eps.
\end{equation}

\bigskip

For a reference on graph theory the reader is directed to \cite{west}. For graphs $G$ and $H$, define the \emph{Cartesian product} of $G$ and $H$, written $G\square H,$ to have vertices $V(G)\times
V(H),$ and vertices $(a,b)$ and $(c,d)$ are joined if $a=c$ and $bd \in E(H)$ or $ac \in E(G)$ and $b=d.$ Many results in the paper are asymptotic in nature as $n \rightarrow \infty$. We emphasize that the notations $o(\cdot)$ and $O(\cdot)$ refer to functions of $n$, not necessarily positive, whose growth is bounded. We say that an event in a probability space holds
\emph{asymptotically almost surely} (or \emph{a.a.s.}) if the probability that it holds tends to $1$ as $n$ goes to infinity. Finally, as already mentioned, for simplicity we will write $f(n) \sim g(n)$ if $f(n)/g(n) \to
1$ as $n \to \infty$; that is, when $f(n) = (1+o(1)) g(n)$.

\section{Cycles}\label{seccycle}
We analyze the case of cycles $C_n$ first, serving as a warm-up for more complex graph classes studied later in the paper.
Even for elementary graphs such as cycles, unusual situations may arise which never occur in Cops and Robbers. For instance, suppose that all the zombies are initially located on an induced subpath
containing at most $\lceil n/2 \rceil -2$ vertices.  Then the survivor can win by starting at a vertex at distance 2 from the subpath, and move away from the zombies so that all zombies walk towards the same direction.
The large horde of zombies eternally lags behind the survivor who remains safe indefinitely. Otherwise, the zombies win with probability $1$. 

The least obvious case to consider occurs when $n$ is even and all the zombies are initially located on an induced subpath containing $n/2-1$ vertices. The survivor should start at distance 2 from the subpath (otherwise, the two extreme zombies behave like traditional cops and move in opposite directions). With probability at most $1/2$,  zombies make a ``bad move''; that is, all zombies move in the same direction. Note that it is at most 1/2 but not exactly 1/2, as more than one zombie might occupy the two extreme vertices.  But this situation forces the survivor to make a move away from them. As a result, we arrive in an analogous configuration of zombies, where the zombies make another bad move with probability at most $1/2$. With probability 1 in some future round, the extreme zombie will not make a bad move and the survivor will eventually be eaten.

\bigskip

In view of this, we have the following lemma which gives the probability that the survivor wins against $k\ge2$ zombies. (For $k=1$, trivially $s_1(C_3)=0$, and  $s_1(C_n)=1$ for $n\ge4$.)

\begin{lemma}\label{lem:cycle}
For any natural numbers $k \ge2$ and $n\ge 9$, we have that
$$
k \left( \frac 12 - \frac {4}{n} \right)^{k-1} \le s_k(C_n) < k \left( \frac 12 \right)^{k-1}.
$$
In particular, $s_k(C_n) \sim k(1/2)^{k-1}$, as $n \to \infty$.
\end{lemma}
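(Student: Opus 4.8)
The plan is to show that $s_k(C_n)$ equals, exactly, the probability that the $k$ initial zombie positions --- which are $k$ independent uniform points of $\Z_n$ --- all fall inside a common arc of at most $m := \lceil n/2 \rceil - 2$ consecutive vertices, and then to estimate this probability. The estimate is a discrete version of the classical fact that $k$ random points on a circle lie in a common semicircle with probability $k(1/2)^{k-1}$, which already explains the claimed asymptotics. To set up the reduction I would use the case analysis immediately preceding the lemma. If all zombies lie on an induced subpath of at most $m$ vertices, the survivor starts at distance $2$ from it and flees; the binding constraint is the far extreme zombie, which then sits at distance $m+1 < n/2$ behind him, so every zombie chases in the same direction and the survivor is never eaten, winning with probability $1$. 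Conversely, if the zombies do not fit on such a subpath he is caught with probability $1$: spanning more than a half of the cycle pincers him at once, while in the borderline even configuration (a subpath of exactly $n/2-1$ vertices) the extreme zombies avoid a ``bad move'' with probability at most $1/2$ per round, so with probability $1$ they eventually trap him. Hence
\[ s_k(C_n) = \Prob\big[\text{all } k \text{ zombies lie on a common arc of at most } m \text{ vertices}\big]. \]

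For the upper bound I would apply a union bound over the possible ``leading'' zombie. For $1 \le i \le k$ let $E_i$ be the event that all zombies lie in the arc $\{z_i, z_i+1, \ldots, z_i + m - 1\}$ of $m$ vertices starting at the position $z_i$ of zombie $i$. If all zombies fit in some arc of $m$ vertices, then $E_i$ holds for the counterclockwise-most zombie, so the event above is contained in $\bigcup_i E_i$. Conditioning on $z_i$, the remaining $k-1$ zombies each land in this fixed arc independently with probability $m/n$, so $\Prob[E_i] = (m/n)^{k-1}$ and
\[ s_k(C_n) \le \sum_{i=1}^{k} \Prob[E_i] = k\left(\frac{m}{n}\right)^{k-1} < k\left(\frac12\right)^{k-1}, \]
since $m/n < 1/2$ for both parities of $n$.

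For the lower bound I would instead exhibit a family of pairwise disjoint events. For $1 \le i \le k$ let $D_i$ be the event that the other $k-1$ zombies all lie in the arc $\{z_i+1, \ldots, z_i + m - 1\}$ of $m-1$ vertices strictly clockwise of zombie $i$. On $D_i$ all zombies occupy the arc of $m$ vertices beginning at $z_i$, so $\bigcup_i D_i$ lies inside the escapable event. The $D_i$ are disjoint: $D_i$ forces zombie $i$ to be the unique strictly counterclockwise-most zombie, and two such leaders cannot coexist because $2(m-1) < n$. As before $\Prob[D_i] = ((m-1)/n)^{k-1}$, and by disjointness
\[ s_k(C_n) \ge \sum_{i=1}^{k} \Prob[D_i] = k\left(\frac{m-1}{n}\right)^{k-1} \ge k\left(\frac12 - \frac4n\right)^{k-1}, \]
using $(m-1)/n \ge 1/2 - 4/n$ for $n \ge 9$ and both parities. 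Since $m/n$ and $(m-1)/n$ are both $1/2 + O(1/n)$, the two bounds are each $(1+o(1))\,k(1/2)^{k-1}$, which yields the asymptotic statement.

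The main obstacle is the reduction rather than the counting. One must check carefully that the distance-$2$ starting position makes every zombie --- the far extreme one being the binding constraint --- chase from behind precisely when the occupied subpath has at most $\lceil n/2 \rceil - 2$ vertices, and that the borderline even configuration is genuinely a loss with probability $1$. Once the escapable configurations are pinned down, the remaining work is bookkeeping: coincident zombie positions must be handled so that the containment $\{\text{all in arc of } m\} \subseteq \bigcup_i E_i$ holds for the upper bound and so that the $D_i$ remain disjoint for the lower bound, both of which I have arranged above by demanding strict ordering of the leader.
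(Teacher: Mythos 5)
Your proof is correct, but the counting is done by a genuinely different route than the paper's. Both arguments rest on the same reduction --- the characterization, from the discussion preceding the lemma, that the survivor wins with probability $1$ exactly when all zombies start on a subpath of at most $m = \lceil n/2 \rceil - 2$ vertices, and loses with probability $1$ otherwise --- so $s_k(C_n)$ is the probability of that initial event in both treatments. Where you diverge is in estimating it: the paper sums over the length $r$ of the covering subpath, distinguishing an ordered pair of zombies at its two endpoints (giving a factor $k(k-1)n(1/n)^2(r/n)^{k-2}$, over-counted for the upper bound, restricted to exactly-one-zombie-per-endpoint for the lower bound), and then compares the sums with the integrals $\int_1^{n/2} x^{k-2}\,dx$ and $\int_0^{n/2-4} x^{k-2}\,dx$. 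You instead use the classical semicircle-style leader decomposition: a union bound over the counterclockwise-most zombie gives $s_k(C_n) \le k(m/n)^{k-1}$, and the pairwise disjoint leader events (disjoint because $2(m-1) < n$) give $s_k(C_n) \ge k((m-1)/n)^{k-1}$, with the parity checks $m/n < 1/2$ and $(m-1)/n \ge 1/2 - 4/n$ closing the argument. Your version buys cleaner closed-form bounds with no integral comparison and no endpoint bookkeeping; its one point needing care, the well-definedness of the leader, holds because the zombie-free complementary arc has more than $n/2$ vertices while all gaps inside the covering arc are shorter, so the minimal covering arc is unique. The paper's sum-over-$r$ scheme is heavier but has the side benefit of yielding the exact formulas for $s_2(C_n)$ and $s_3(C_n)$ recorded at the end of its proof. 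One small slip in your sketch of the reduction: in the borderline even case you write that the extreme zombies ``avoid a bad move with probability at most $1/2$ per round,'' which is backwards --- the bad move (all zombies moving the same way) occurs with probability at most $1/2$ per round, so with probability $1$ some round sees a trapping move; the conclusion you draw is the correct one, so this is phrasing rather than substance.
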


\begin{proof}
Recall that the survivor has a strategy to live forever if and only if all zombies initially lie in a subpath of $r \le \lceil n/2
\rceil -2$ vertices. In order to bound $s_k(C_n)$ from above, we first over-count these configurations, by distinguishing two zombies one at each end of the subpath.
There are $k(k-1)$ ways to select the two distinguished zombies that are placed at the ends of a subpath consisting of $r \le \lceil n/2
\rceil -2$ vertices, and $n$ choices for the position of the path. The two zombies start at the right place with probability $(1/n)^2$, the remaining ones must start on the subpath, which happens
with probability $(r/n)^{k-2}$. Since we are overcounting configurations, it follows that
\begin{eqnarray*}
s_k(C_n) &\le &\sum_{r=1}^{\lceil n/2 \rceil -2} k (k-1) n (1/n)^2 (r/n)^{k-2} \\
&< &k(k-1) n^{1-k} \int_1^{n/2} x^{k-2} dx \\&< &k \left( \frac 12 \right)^{k-1}.
\end{eqnarray*}
On the other hand, by considering configurations in which the two end-vertices of the subpath are occupied by exactly one zombie we have that
\begin{eqnarray*}
s_k(C_n) &\ge& \sum_{r=3}^{\lceil n/2 \rceil -2} k (k-1) n (1/n)^2 ((r-2)/n)^{k-2}\\
& \ge& k(k-1) n^{1-k} \left( \int_0^{n/2-4} x^{k-2} dx \right) \\
&= &k \left( \frac 12 - \frac {4}{n} \right)^{k-1}.
\end{eqnarray*}
The proof of the lemma follows. Moreover, note that,
for $k=2$ and $k=3$, it is easy to find the winning probability precisely:
\begin{align*}
s_2(C_n) &= \sum_{r=2}^{\lceil n/2 \rceil -2} 2\cdot n \cdot \frac {1}{n^2} + n \cdot \frac {1}{n^2}, \mbox{ and } \\
s_3(C_n) &= \sum_{r=3}^{\lceil n/2 \rceil -2} 3\cdot 2\cdot n \cdot \frac {1}{n^2} \cdot \frac {r-2}{n} + 2 \sum_{r=2}^{\lceil n/2 \rceil -2} 3\cdot n \cdot \frac {1}{n^3}  + n \cdot \frac {1}{n^3}. 
\qedhere
\end{align*}
\end{proof}

As an immediate consequence of Lemma~\ref{lem:cycle}, we deduce that $z(C_n) = 4$ for all $n \ge 44$. Combining this with a direct examination of the smaller values of $n$, we derive the zombie number and the cost of being undead for the cycle $C_n$ of length $n\ge3$:
\begin{theorem}\label{cycle}
$$
z(C_n) =
\begin{cases}
4 & \text{ if } n\ge 27 \text{ or } n=23, 25,\\
3 & \text{ if } 11 \le n \le 22 \text{ or } n=9, 24, 26, \\
2 & \text{ if } 4 \le n \le 8 \text{ or } n = 10,\\
1 & \text{ if } n = 3;
\end{cases}
$$
and therefore,
$$
Z(C_n) =
\begin{cases}
2 & \text{ if } n\ge 27 \text{ or } n=23, 25,\\
3/2 & \text{ if } 11 \le n \le 22 \text{ or } n=9, 24, 26, \\
1 & \text{ if } 3 \le n \le 8 \text{ or } n = 10.
\end{cases}
$$
\end{theorem}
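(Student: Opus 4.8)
The plan is to determine $z(C_n) = \min\{k \ge c(C_n) : s_k(C_n) \le 1/2\}$ by pinning down, for each $n$, the exact values of $s_2$ and $s_3$ together with a cheap upper bound on $s_4$. First I would record the cop numbers $c(C_3) = 1$ and $c(C_n) = 2$ for $n \ge 4$, and the trivial values $s_1(C_3) = 0$ and $s_1(C_n) = 1$ for $n \ge 4$; the former already yields $z(C_3) = 1$. For $n \ge 9$, Lemma~\ref{lem:cycle} with $k = 4$ gives $s_4(C_n) < 4(1/2)^3 = 1/2$, so $z(C_n) \le 4$ for all such $n$, and it then remains only to decide, in order, whether $s_2 \le 1/2$ (giving $z = 2$) and, failing that, whether $s_3 \le 1/2$ (giving $z = 3$); the monotonicity $s_{k+1} \le s_k$ ensures this is consistent.

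Next I would simplify the two exact formulas stated in Lemma~\ref{lem:cycle}. The expression for $s_2$ collapses to $s_2(C_n) = 1 - 5/n$ when $n$ is even and $1 - 4/n$ when $n$ is odd. Summing the arithmetic and quadratic series for $s_3$ and writing $M = \lceil n/2 \rceil - 2$ gives the compact form $s_3(C_n) = (3M^2 - 3M + 1)/n^2$, which becomes $\tfrac34 - \tfrac{15}{2n} + \tfrac{19}{n^2}$ for even $n$ and $\tfrac34 - \tfrac6n + \tfrac{49}{4n^2}$ for odd $n$.

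The thresholds then follow from elementary inequalities. The condition $s_2 \le 1/2$ holds exactly for even $n \le 10$ and odd $n \le 8$, so among $n \ge 9$ the only case with $z = 2$ is $n = 10$, and $z \ge 3$ otherwise. The condition $s_3 \le 1/2$ reduces, after clearing denominators, to the quadratics $n^2 - 30n + 76 \le 0$ for even $n$ and $n^2 - 24n + 49 \le 0$ for odd $n$, whose larger roots lie near $27.2$ and $21.7$ respectively; hence $s_3 \le 1/2$ exactly for even $n \le 26$ and odd $n \le 21$. Combining these facts with the $z \le 4$ bound produces $z = 3$ on $\{9\} \cup \{11, \dots, 22\} \cup \{24, 26\}$ and $z = 4$ on $\{23, 25\} \cup \{n \ge 27\}$, matching the claimed ranges. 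I would make this rigorous by evaluating the two quadratics at the handful of integer boundary values $n = 21, 22, \dots, 28$ to confirm on which side each parity falls.

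Finally I would dispose of $3 \le n \le 8$ by direct inspection using the escape characterization (the survivor lives forever iff all zombies start on an arc of at most $\lceil n/2\rceil - 2$ vertices): for $n = 4$ this arc is empty, so $s_2 = 0$, while for $5 \le n \le 8$ a direct count (equivalently, the simplified $s_2$ expression) gives $s_2 \le 1/2$; hence $z = 2$ throughout this range. The cost of being undead is then read off as $Z(C_n) = z(C_n)/c(C_n)$, giving $Z = 1$ when $n = 3$ or $z = 2$, $Z = 3/2$ when $z = 3$, and $Z = 2$ when $z = 4$. I expect the only delicate point to be the parity split: because even and odd cycles cross the $1/2$ threshold at different places, the irregular values $9, 10$ and $23$–$26$ arise precisely from the interleaving of the two quadratic thresholds, and these boundary integers must be checked individually rather than inferred from the asymptotics alone.
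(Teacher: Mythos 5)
Your proposal is correct and takes essentially the same route as the paper: the authors also derive the theorem as an immediate consequence of Lemma~\ref{lem:cycle} --- using its upper bound $s_4(C_n) < 4(1/2)^3 = 1/2$ together with the exact expressions for $s_2(C_n)$ and $s_3(C_n)$ recorded at the end of that lemma's proof --- and then settle the small values of $n$ by direct examination. Your closed forms $s_2(C_n) = 1 - 5/n$ (even $n$), $1 - 4/n$ (odd $n$), and $s_3(C_n) = (3M^2 - 3M + 1)/n^2$ with $M = \lceil n/2 \rceil - 2$, and the resulting parity-dependent quadratic thresholds, are precisely the ``direct examination'' the paper leaves implicit, and they check out.
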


\section{Projective Planes}\label{secproj}

Incidence graphs are useful in constructing graphs with large cop numbers; see~\cite{Designs,p}. An \emph{incidence structure} consists of a set $P$ of points, and a set $L$ of lines along with an
incidence relation consisting of ordered pairs of points and lines. Given an incidence structure $S$, we define its \emph{incidence graph} $G(S)$ to be the bipartite graph whose partite sets are the points and lines, respectively, with a point joined to a line if the two are incident in $S.$  Projective planes are some of the most well-studied examples of incidence
structures. A \emph{projective plane} consists of a set of points and lines satisfying the following axioms.

\begin{enumerate}
\item There is exactly one line incident with every pair of distinct points.

\item There is exactly one point incident with every pair of distinct lines.

\item There are four points such that no line is incident with more than two of them.
\end{enumerate}
Finite projective planes have $q^{2}+q+1$ points and $q^{2}+q+1$ lines, for some integer $q>0$ (called the \emph{order} of the plane). Note that each point is incident with $q+1$ lines, and each line is incident with $q+1$ points.
It is known that, for every $q$ that is a prime power, a projective plane $P_q$ of order $q$ exists. The existence of finite projective planes of other orders is an open question. For more on projective planes, see for example \cite{dem}.

\bigskip

The \emph{girth} of a graph $G$ is defined as the length of a shortest cycle. As proved in~\cite{af}, if the girth of $G$ is at least $5$, then $c(G) \ge \delta(G)$, where $\delta(G)$ is the
\emph{minimum degree} of $G$. Let $G_q =G(P_q)$ be the incidence graph of a projective plane of order $q$. It immediately follows from the above properties of $P_q$ that $G_q$ is connected, has girth $6$, is $(q+1)$-regular, and has $2(q^2 + q + 1)$ many vertices. Hence, $c(G_q) \ge q+1$. In fact, $c(G_q)=q+1$, as was shown in~\cite{p}.

\medskip

We will now show that roughly two times more zombies are needed to eat the survivor than cops capturing the robber.

\begin{theorem}\label{proj}
$z(G_q) = 2q + \Theta(\sqrt{q})$, as $q \to \infty$. Hence, $Z(G_{q}) \sim 2$, as $q \to \infty$.
\end{theorem}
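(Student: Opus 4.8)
The plan is to locate the value of $k$ at which $s_k(G_q)$ crosses $1/2$ by proving two matching estimates: that $s_k(G_q) > 1/2$ whenever $k \le 2q + c_1\sqrt{q}$, and that $s_k(G_q) < 1/2$ whenever $k \ge 2q + c_2\sqrt{q}$, for suitable constants $0 < c_1 < c_2$. Since $s_k(G_q)$ is non-increasing in $k$, this sandwiches $z(G_q)$ in the interval $[2q + c_1\sqrt{q},\, 2q + c_2\sqrt{q}]$ and yields $z(G_q) = 2q + \Theta(\sqrt{q})$; the conclusion $Z(G_q) = z(G_q)/c(G_q) = (2q+\Theta(\sqrt q))/(q+1) \sim 2$ then follows at once from $c(G_q) = q+1$. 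The first reduction I would make is to argue that the game is essentially decided by the initial random placement of the zombies together with the outcomes of their first one or two tie-broken moves: because $G_q$ has diameter $3$ and is highly connected, a survivor who is not already "safe" in this local sense is overwhelmed within a bounded number of rounds. Thus, up to lower-order error, $s_k(G_q)$ equals the probability that the random initial configuration admits a starting vertex from which the survivor can sustain an escape, and the whole problem reduces to analyzing this random configuration.

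For the lower bound on $z$ (the survivor's side), by the symmetric roles of points and lines I may assume the survivor starts on a point $p$. The intended strategy is to sit at, and then run from, a point whose neighbourhood --- the $q+1$ lines through it --- is free of zombies, escaping at each step to a point none of whose incident lines currently carries a zombie. The relevant fact is that a zombie can capture the survivor only by occupying a line incident to him, so a point all of whose $q+1$ lines are clean is momentarily safe, and the survivor's task is to chain such points together into a sustained run. Quantifying how dense the zombies must be to block every such escape route is a coupon-collector-type problem over the $q+1$ neighbours, and I would show by a first/second-moment computation, with concentration supplied by the Chernoff bounds stated above, that for $k \le 2q + c_1\sqrt{q}$ a favourable starting configuration exists with probability at least $1/2$.

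For the upper bound on $z$ (the zombies' side), when $k \ge 2q + c_2\sqrt{q}$ I would show that with probability greater than $1/2$ no starting vertex is favourable, via a union bound over all $2(q^2+q+1)$ possible starting vertices; it then suffices to bound the probability that a fixed start is favourable by $o(q^{-2})$. Double exposure is the key device here: first expose the initial positions to identify the zombies lying within bounded distance of the chosen start, and then, conditionally, expose the random tie-breaking of their first moves and show that these zombies spread out so as to guard all $q+1$ escape routes simultaneously, completing the capture. Coupon-collector-type estimates enter at this second stage to control the number of zombies needed to cover every route, and the factor $2$ --- as opposed to the cop number $q+1$ --- reflects the inefficiency of random, as opposed to adversarially optimal, movement.

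The main obstacle is precisely the characterization of the survivor's optimal strategy and the associated trapping condition: unlike in Cops and Robbers, the zombies move randomly, so one must argue that no clever survivor route can exploit the residual freedom in their movement, and controlling the dependencies introduced by those random moves is exactly what the double exposure is for, letting me treat the placement and the tie-breaking as two separate rounds of revelation. Finally, the width of the transition window is governed by a $\Bin(\cdot,1/2)$ fluctuation: at the decisive moment the tie-broken zombie moves distribute among the two sides of an escape like independent fair coins, and the survivor survives precisely when this count lands on the favourable side of its mean. The Chernoff upper tail gives the $O(\sqrt{q})$ outer edge of the window, while the anti-concentration estimate~(\ref{eq:Bin_lower}) guarantees that the survival probability stays above $1/2$ throughout a window of width $\Omega(\sqrt{q})$ above $2q$, which is what forces the correction term to be genuinely of order $\sqrt{q}$ rather than smaller.
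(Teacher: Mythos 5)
There is a genuine gap: your proposal never identifies the quantity that actually decides the game, namely the random split $(k_\ell,k_p)$ of the $k$ zombies between lines and points. This split is the $\Bin(k,1/2)$ variable to which both Chernoff's bound and the anti-concentration estimate~(\ref{eq:Bin_lower}) are applied in the paper, and the threshold $2q$ arises because the bipartition cuts the horde into two teams with different jobs: the zombies starting on the survivor's side of the bipartition regroup at his previously occupied vertex and chase him from behind, forcing perpetual motion, while each zombie on the opposite side blocks, in \emph{every} round, one uniformly random neighbour among the $q+1$ vertices incident to the survivor (via the $q+1$ edge-disjoint paths of length $3$ guaranteed by girth $6$). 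The game is then decided by whether the blocking team has size at least $q$ or at most $q-1$: in the first case the zombies win with probability $1$ (Lemma~\ref{lem1}); in the second, the survivor, who picks his side of the bipartition so that the smaller team does the blocking, wins a.a.s.\ (Lemma~\ref{lem2}). Your attribution of the $\Bin(\cdot,1/2)$ fluctuation to tie-broken moves ``at an escape,'' and of the factor $2$ to a generic inefficiency of random movement, miss this mechanism; so does your survivor analysis, which requires no first/second-moment computation: when $k_\ell\le q-1$, a starting point with all $q+1$ incident lines clean exists \emph{deterministically} (since $(q-1)(q+1)<q^2+q+1$), and after surviving the first round the survivor escapes forever by pigeonhole, because at most $q-1$ blockers cannot cover his $q$ escape options in any round.

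More importantly, your quantitative plan for the zombies' side would fail. Capture is not resolved within a bounded number of rounds: with roughly $q$ blockers each choosing uniformly among $q+1$ neighbours, the probability of covering all $q$ escape routes in any single round is exponentially small --- indeed, your own coupon-collector heuristic shows that a.a.s.\ about $e^{-2}q$ neighbours remain unblocked in a given round, which is exactly how the paper proves the survivor survives the first round in the subcritical regime. The zombies win only because this tiny per-round blocking probability is bounded away from zero uniformly over the history of the process, so over an infinite horizon simultaneous blocking eventually occurs with probability $1$. Consequently, your union bound over the $2(q^2+q+1)$ starting vertices with per-start ``favourability'' probability $o(q^{-2})$ is unobtainable: for any bounded-horizon notion of favourable, a fixed start is favourable with probability $1-o(1)$, not $o(q^{-2})$. (Your outer shell --- the window $[2q+c_1\sqrt{q},\,2q+c_2\sqrt{q}]$, monotonicity of $s_k$, and $Z(G_q)\sim 2$ from $c(G_q)=q+1$ --- does match the paper's refined statement, but the core argument underneath it is not recoverable from the proposal.)
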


\noindent
We will first prove the weaker statement; namely:
$$
2q - \omega \sqrt{q} \le z(G_q) \le 2q + \omega \sqrt{q},
$$
where $\omega = \omega(q)$ is any function tending to infinity as $q\to \infty$,
by considering the upper and lower bounds separately.
After those proofs, we will discuss how to improve the error terms in the estimate of $z(G_q)$. We prove the lower and upper bounds independently, proving the lower bound first.

\begin{lemma}\label{lem1}
Let $k = 2q + \omega \sqrt{q}$, where $\omega = \omega(q)$ is any function tending to infinity as $q\to \infty$. Then $s_k(G_q) \to 0$, as $q \to \infty$.
\end{lemma}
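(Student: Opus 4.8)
The plan is to show that with $k = 2q + \omega\sqrt q$ zombies, a.a.s.\ the survivor has no winning start, so that $s_k(G_q)\to 0$. Since $G_q$ is bipartite with parts $\Tv$ (points) and $\LL$ (lines) of equal size $q^2+q+1$, and the whole set-up is invariant under the point/line duality, I would first reduce to bounding, for a single fixed starting vertex $v$, the probability that $v$ is a winning start, and then take a union bound over all $2(q^2+q+1)$ choices of $v$. Fix $v$ to be a point (the line case being identical by duality), and classify the zombies by distance from $v$: distance $1$ (on the $q+1$ lines through $v$), distance $2$ (on the $q^2+q$ remaining points), and distance $3$ (on the $q^2$ lines missing $v$). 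The move rules then read: a distance-$2$ zombie moves deterministically onto the unique line joining it to $v$, so it becomes adjacent to $v$, while a distance-$3$ zombie moves to a point of its line. In particular the survivor cannot remain at $v$ once any zombie is adjacent, so he is forced to flee along one of the $q+1$ lines through $v$.

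The heart of the matter is a counting dichotomy on the two sides of the bipartition. The number of zombies on points is distributed as $\Bin(k,1/2)$, up to the negligible correction from the ``$+1$'' in $q^2+q+1$, and likewise for lines; write these as $P_{\Tv}$ and $P_{\LL}$, with $P_{\Tv}+P_{\LL}=k$. The survivor will start on the sparser side, so the quantity that matters is $\min(P_{\Tv},P_{\LL})$, which is essentially the number of zombies at distance $2$ from his start. The guiding heuristic is that at most $q$ distance-$2$ zombies can never block all $q+1$ escape lines by themselves (each occupies only one line through $v$, so a pigeonhole line always survives), whereas once both sides carry more than $q$ zombies the escape routes close off. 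Since $\E{P_{\Tv}}=\E{P_{\LL}}=k/2 = q + \frac{\omega}{2}\sqrt q$, the event $\min(P_{\Tv},P_{\LL})\le q$ requires a deviation of at least $\frac{\omega}{2}\sqrt q$ from the mean, which by Chernoff's bound has probability $e^{-\Theta(\omega^2)}\to 0$. This is exactly where the threshold $2q$ and the $\sqrt q$ window come from: the transition is governed by a binomial count crossing $q$, not by a coupon-collector logarithm. Note that this concentration is a single global event, independent of $v$.

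To make ``the escape routes close off'' precise I would use a double exposure. First expose only the distance-$2$ zombies: after one move they occupy a set of lines through $v$, leaving a family of candidate escape lines whose size I can control by a standard occupancy (balls-in-bins) estimate. Then expose the distance-$3$ zombies, whose forced motion floods the points of whichever line the survivor commits to; a coupon-collector estimate bounds how many candidate lines survive this second wave, and I would show that when both sides carry more than $q$ zombies, a.a.s.\ none do, simultaneously for every starting vertex. Here the union bound over the $2(q^2+q+1)$ choices of $v$ is absorbed by the $e^{-\Theta(\omega^2)}$ factor (taking $\omega$ to grow, e.g., faster than $\sqrt{\log q}$). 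Combining the two exposures with the concentration of $P_{\Tv}$ and $P_{\LL}$ then yields $s_k(G_q)\to 0$.

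The step I expect to be the main obstacle is upgrading the one-round pigeonhole intuition to a genuine perpetual-capture statement. The survivor sees the configuration and reacts, all zombies switch sides (and effectively re-randomise) every round, and a single over-large clump of zombies merely shadows him harmlessly; so a naive round-by-round covering bound only yields a much weaker threshold. The real work is to show that carrying more than $q$ zombies at distance $2$ does not merely block the time-zero escapes but forces capture over the course of the game, uniformly over the survivor's adaptive choices. This is what necessitates the careful bookkeeping of the two exposures and the control of the dependence between the survivor's chosen route and the random tie-breaks of the zombies, and it is the part of the argument I would expect to require the most care.
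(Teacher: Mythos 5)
There is a genuine gap, and it is exactly the step you yourself flag at the end: the ``perpetual-capture'' statement is not a technical refinement to be bolted on afterwards --- it is the entire content of the lemma, and moreover the specific tool you propose for the blocking step is quantitatively wrong. You claim that once both sides carry more than $q$ zombies, a coupon-collector estimate shows that a.a.s.\ \emph{all} escape lines through the survivor's start are blocked. This is false by a wide margin: with $(1+o(1))q$ odd-side zombies each blocking a uniformly random one of the $q+1$ lines through $v$, each line survives with probability about $e^{-1}$, so a.a.s.\ $\Theta(q)$ escape lines remain open in any single round (the paper exploits precisely this fact in the \emph{upper}-bound Lemma~\ref{lem2}, where with at most $2(q-1)$ draws a.a.s.\ $(1+o(1))e^{-2}q$ neighbours stay unblocked); full coverage a.a.s.\ would require roughly $(q+1)\ln q$ draws. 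The threshold $\min\{k_\ell,k_p\}=q$ is not an a.a.s.\ covering threshold but a \emph{deterministic feasibility} threshold, and the paper's proof is structured accordingly: if $\min\{k_\ell,k_p\}\ge q$, the zombies win with conditional probability exactly $1$. The even-side group all arrive adjacent to the survivor after one move and thereafter permanently occupy his previous vertex, so he must run forward every round with exactly $q$ choices; the odd-side group's tie-breaks re-randomise every round, and in each round, regardless of history, there is a positive (exponentially small, but positive) probability --- at least $q!/(q+1)^q$ with $k_p\ge q$ blockers --- that all $q$ forward neighbours are simultaneously blocked, so with probability $1$ this eventually happens and the survivor is eaten. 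You correctly diagnosed that ``the transition is governed by a binomial count crossing $q$, not by a coupon-collector logarithm,'' but your proposed execution then contradicts that diagnosis, and you never supply the chasing mechanism (the first group pinning the survivor's previous vertex) that reduces the game to this repeated-trial structure.

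A second, compounding problem is your union bound over the $2(q^2+q+1)$ starting vertices. Since $\omega$ is an \emph{arbitrary} function tending to infinity, your repair --- assuming $\omega\gg\sqrt{\log q}$ so that $q^2e^{-\Theta(\omega^2)}=o(1)$ --- proves a strictly weaker lemma, and monotonicity of $s_k$ cannot rescue it: a slower-growing $\omega$ means \emph{fewer} zombies and hence a \emph{larger} $s_k$, so the hard case is precisely the one you exclude. (Your instinct that the binomial deviation is a single global event is right, but then the $e^{-\Theta(\omega^2)}$ factor cannot be the thing that absorbs a union bound over $v$; what would need per-vertex bounds are your occupancy events, which fail a.a.s.\ anyway, as above.) In the paper's argument the union bound never arises: because the conditional capture statement holds with probability $1$ --- not merely $1-o(1)$ --- it holds simultaneously against every starting vertex and every adaptive survivor strategy, so the only probability spent in the whole proof is the single Chernoff estimate showing a.a.s.\ $\min\{k_\ell,k_p\}\ge q$, which is the one step you have essentially correct.
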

\begin{proof}
As usual, the initial position of the zombies affects the rest of the game. Suppose that $k_\ell$ zombies start on lines and $k_p$ ones on points. First, let us show that if $\min\{k_\ell, k_p\} \ge q$,
then zombies win the game with probability 1. This claim holds for any $q$, not necessarily large. Without loss of generality, the survivor starts the game on the line $v$. We assume that line $v$ is free of
zombies and is not incident with any point containing a zombie, since otherwise the zombies trivially win (recall that zombies move first). Zombies are partitioned into two groups, the first group contains zombies initially occupying lines, the second group consists of zombies initially on points.

For each zombie from the first group, there exists a unique point that is incident with line $v$ and with the line occupied by the zombie. Hence, zombies from the first group move to the corresponding
neighbour of $v$, forcing the survivor to move in the next round. If the survivor survives the first round, all zombies from the first group meet at $v$ and will keep chasing the survivor, always moving to a vertex previously
occupied by the survivor.

Let us now investigate the behaviour of the zombies from the second group. Consider any given zombie of this group. For each point $u$ that is a neighbour of line $v$, there is a unique line that contains
both $u$ and the point occupied by the zombie.
This defines a path of length $3$ joining the positions of the survivor and the zombie via $u$.
Since the girth of $G_q$ is 6, there are precisely $q+1$ edge-disjoint such paths, one for each neighbour of $v$.
The zombie has to select to move along one of these paths uniformly at random and, by doing so, blocks the neighbour $u$ of $v$ corresponding to the chosen path as a potential next move for the survivor (that is, if the survivor moves the $u$, he will get eaten in the next round). This situation will occur at each round until the survivor is eaten.

Recall the survivor has to keep ``running forward'', since he always has zombies from the first team right behind him. So in each round, the survivor has $q$ neighbours to choose from. But
each time, regardless of the history of the process, with positive probability the zombies from the second team block all of these neighbours. Hence, with probability 1 it must happen that, sooner or later, all of
them are blocked.

The rest of the proof is straightforward. Clearly, $\E {k_\ell} = k/2 = q + \omega \sqrt{q} / 2$. It follows from Chernoff's bound that a.a.s.\
$$
| k_{\ell} - \E {k_\ell} | \le \sqrt{\omega \E {k_\ell}} \sim \sqrt{\omega q},
$$
and so a.a.s.\ both $k_{\ell}$ and $k_p$ are at least $q$.
\end{proof}

We next turn to the upper bound of the zombie number of $G_q$.

\begin{lemma}\label{lem2}
Let $k = 2q - \omega \sqrt{q}$, where $\omega = \omega(q)$ is any function tending to infinity as $q\to \infty$. Then $s_k(G_q) \to 1$, as $q \to \infty$.
\end{lemma}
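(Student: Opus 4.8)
The plan is to show that for almost every initial placement the survivor has a strategy that keeps him alive forever; since $k=2q-\omega\sqrt{q}<2q$ forces $\min\{k_\ell,k_p\}<q$, this will give $s_k(G_q)\to1$. I would build on the dichotomy already isolated in the proof of Lemma~\ref{lem1}. Suppose the survivor sits on a vertex $x$ carrying no zombie and with no zombie on a neighbouring vertex. Relative to $x$ the zombies split into those at distance $2$ (on the same side of the bipartition as $x$) and those at distance $3$ (on the opposite side). The distance-$2$ zombies collapse, within a round of the survivor's first move, into a single \emph{pursuing horde} that trails one step behind him and thereafter only forbids the single vertex he has just vacated. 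Each distance-$3$ zombie is a \emph{blocker}: by the girth-$6$ property it has exactly one geodesic to $x$ through each of the $q+1$ neighbours of $x$, picks one uniformly at random, and so forbids exactly one neighbour per round. Hence, if there are $b$ blockers, the survivor is eaten in a round only when the horde's vertex and the at most $b$ blocked neighbours together exhaust all $q+1$ neighbours.

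The survivor will start on a point if $k_\ell\le k_p$ and on a line otherwise, so that the blockers are precisely the $\min\{k_\ell,k_p\}\le k/2<q$ zombies on the opposite side. In every round after the horde has collapsed, these $b<q$ blockers forbid at most $b<q$ of his $q$ forward neighbours, so an unblocked forward vertex always exists and the survivor simply moves to one; thus survival for all time reduces to surviving the very first round.

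The first round is the crux, and I expect it to be the main obstacle: there the horde has not yet collapsed, so the $O(q)$ same-side zombies are in transit and transiently occupy up to nearly $2q$ of $x$'s incident vertices, on top of the $b<q$ blocked ones. I would clear this hurdle by double exposure. Exposing first only the positions of the zombies, I would use an occupancy estimate to produce an admissible start: the $\le k<2q$ same-side zombies land in the $q+1$ pencils through $x$ like $O(q)$ balls in $q+1$ bins, so a typical $x$ has $\Theta(q)$ \emph{horde-free} incident vertices, and one checks that the number of admissible $x$ (carrying no zombie and adjacent to no blocker) is $\Theta(q^2)$. Exposing next the blockers' random first moves, the fewer than $q$ blockers hit uniformly random neighbours, so the expected number of horde-free \emph{and} unblocked incident vertices of $x$ is again of order $q$; a second-moment (or Poisson-approximation) concentration bound then shows this number is positive a.a.s. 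That unblocked, horde-free vertex is a safe first move, after which survival is automatic, and so $s_k(G_q)\to1$.
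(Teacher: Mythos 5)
Your proposal is correct and takes essentially the same approach as the paper's proof: the same dichotomy induced by the bipartition (opposite-side zombies as per-round blockers choosing a uniformly random neighbour via the girth-$6$ geodesic structure, same-side zombies merging into a horde that trails one step behind), the same reduction to surviving the first round given fewer than $q$ blockers, and the same double-exposure plus occupancy (coupon-collector) estimate for that first round. The only deviations are organizational: you pick the survivor's side adaptively so that $\min\{k_\ell,k_p\}\le k/2<q$ holds deterministically (avoiding the paper's Chernoff step), and you expose all initial positions before choosing the start vertex, which forces your balls-in-bins/second-moment argument, whereas the paper exposes only the line-zombies first, picks $v$ off all their lines by the counting bound $(q-1)(q+1)<q^2+q+1$, and thereby keeps the point-zombies uniform so that round one is a clean coupon-collector computation.
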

\begin{proof}
We keep the notation introduced in the previous lemma. We will use a well known technique of \emph{double exposure}. For each zombie, at this point we only decide whether the zombie starts on some
line or on some point, each of which happens with probability 1/2. Arguing as before, we obtain that a.a.s.\ both $k_{\ell}$ and $k_p$ are at most $q-1$, and condition on that for the rest of the argument. (Let us note that we will only use the fact
that $k_\ell$ is at most $q-1$; the argument is still valid for $k_p$ larger than $q-1$ as long as it is $(1+o(1))q$; see the comments after the proof.) Now, we can expose the initial positions of the zombies
from the lines, the first group. Regardless where they start, there is at least one point that does not belong to any line associated with zombies (since the number of points, $q^2+q+1$, is more than
$(q-1)(q+1)$, which is a trivial upper bound for the number of points that belong to lines associated with zombies). The survivor starts on one of them, point $v$.
Arguing as in the proof of the previous lemma, during the first round, each zombie from the first team moves to a random neighbour blocking precisely one neighbour of $v$ (uniformly at random).

Let us then expose the initial positions of the zombies from the second team and investigate their behaviour during the first round. Each point other than $v$ has a unique common neighbour with $v$.
Hence, the set of points can be partitioned into $v$ and $q+1$ sets of size $q$ that correspond to $q+1$ neighbours of $v$. With probability $(1-1/(q^2+q+1))^{q-1} \sim 1$ no zombie starts at $v$, and
so we condition on this, pretending that each zombie starts on a random point other than $v$, selected uniformly at random. If a zombie starts on a point corresponding to a neighbour $u$ of $v$,
then this zombie moves to $u$ during the first round, blocking this neighbour as a potential move for the survivor. Hence, each zombie, independently, regardless whether he starts on a line or on a
point, blocks one neighbour of $v$ uniformly at random. Therefore, we obtain a classic \emph{coupon collector problem} with $q+1$ coupons and at most $2(q-1)$ draws. It is straightforward to see that a.a.s.\ at least $(1+o(1)) e^{-2} q$ neighbours of $v$ will not be blocked. Indeed, the probability that a given neighbour of $v$ is not blocked is at least $(1-1/(q+1))^{2(q-1)} \sim e^{-2}$. As a result, a.a.s.\ the survivor can survive the first round.

The rest of the proof is obvious. All zombies from the second team group together and chase the survivor forcing the survivor to ``move forward'' but no other neighbour of the survivor is blocked.
Since there are at most $q-1$ neighbours blocked by the first team (deterministically), the survivor keeps running forever, winning the game.
\end{proof}

The proof of Theorem~\ref{proj} now follows from Lemmas~\ref{lem1} and \ref{lem2}. If $k_+ = k_+(q) = 2q + C \sqrt{q}$ for some large constant $C$, then it follows from Chernoff's bound that with probability at least, say, 2/3, both $k_{\ell}$ and $k_p$ are at least $q$. Hence, we derive that $s_{k_+}(G_q) \le 1/3$ as $q \to \infty$.  On the other hand, if $k_- = k_-(q) = 2q + c \sqrt{q}$ for some sufficiently small constant $c$, then~(\ref{eq:Bin_lower}) implies that with probability at least, say, 2/3, one of $k_{\ell}$ and $k_p$ is at most $q-1$ (and the other one is $(1+o(1))q$).  This time we obtain that
$s_{k_-}(G_q) \ge 2/3+o(1)$ as $q \to \infty$. Note that this time we need to add the $o(1)$ term which corresponds to the probability that the survivor cannot survive the first phase of the game; that is, before all zombies from the second team group together.

\section{Hypercubes}\label{sechyper}

We now investigate the \emph{hypercube of dimension} $n$, written $Q_n$. Note that each vertex of $Q_n$ can be identified with a binary $n$-dimensional vector (or \emph{bit string}). It was established in~\cite{mm} that the cop
number of the Cartesian product of $n$ trees is $\lceil \frac{n+1}{2} \rceil$; in particular, $c(Q_n) = \lceil \frac{n+1}{2} \rceil$. We note that cop numbers of the Cartesian and other graph products were investigated first in~\cite{nn}. 

We will show that approximately $4/3$ times more zombies are needed to eat the survivor.

\begin{theorem}\label{hyper}
$z(Q_n) = \frac {2n}{3} + \Theta(\sqrt{n})$, as $n \to \infty$. Hence, $Z(Q_n) \sim \frac 43$, as $n \to \infty$.
\end{theorem}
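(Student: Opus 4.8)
The plan is to follow the same two-sided scheme used for projective planes in Lemmas~\ref{lem1} and~\ref{lem2}: prove $s_k(Q_n)\to 0$ when $k=\tfrac{2n}{3}+\omega\sqrt n$ and $s_k(Q_n)\to 1$ when $k=\tfrac{2n}{3}-\omega\sqrt n$, and then upgrade $\omega$ to explicit constants using (\ref{eq:Bin_lower}) to pin the $\Theta(\sqrt n)$ error term, exactly as in the discussion following Lemma~\ref{lem2}. Throughout I identify each vertex with a bit string, so that a zombie at Hamming distance $d$ from the survivor flips one of its $d$ ``wrong'' coordinates chosen uniformly at random, while the survivor flips a single coordinate or passes.

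The structural fact I would isolate first is a \emph{parity invariant}. If the survivor never passes, then in one round a fixed zombie first decreases its distance by $1$ (its own move) and then changes it by $\pm 1$, according to whether the survivor's flipped coordinate is wrong or correct for it; hence the distance changes by $0$ or $-2$, so its parity is preserved and distances are non-increasing. Since a uniform start places a zombie at distance $\Bin(n,1/2)$, which is even or odd each with probability $\tfrac12+o(1)$, asymptotically $n_e\sim n_o\sim k/2$, where $n_e,n_o$ count the even- and odd-distance zombies, with a $\Theta(\sqrt n)$ fluctuation in the split by Chernoff's bound. Next I characterize one-round survival: after the zombies move, the survivor is safe iff he can flip a coordinate outside the forbidden set $F$, the union of post-move wrong coordinates of the zombies currently at distance $2$ or $3$ (a distance-$2$ zombie blocks one coordinate, a distance-$3$ zombie blocks two, zombies at distance $\ge 4$ block nothing, and passing is safe only when no zombie sits at distance $2$).

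Because distances have fixed parity and are non-increasing, an even zombie descends no lower than distance $2$ and an odd one no lower than distance $3$; moreover holding a zombie at its parity floor is cheapest for the survivor, since a floored even (resp.\ odd) zombie blocks only one (resp.\ two) directions. The survivor is therefore cornered precisely when the floored zombies block all $n$ coordinates, and the total number they can block is at most $n_e + 2n_o = \tfrac32 k + \Theta(\sqrt n)$. This is the source of the constant: comparing $\tfrac32 k$ with $n$ is comparing $k$ with $\tfrac{2n}{3}$. The lower bound ($s_k\to 1$ for $k=\tfrac{2n}{3}-\omega\sqrt n$) is the easier direction and I would prove it by this counting: condition on the a.a.s.\ event $n_e+2n_o\le \tfrac32 k + O(\sqrt n) < n$, and have the survivor drive each zombie to, and hold it at, its parity floor, flipping at each step one of the $\ge 1$ coordinates guaranteed outside $F$ by the strict inequality; one then checks this invariant is maintainable, so the survivor is never cornered.

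The upper bound ($s_k\to 0$ for $k=\tfrac{2n}{3}+\omega\sqrt n$) is where I expect the real work, for two reasons. First, unlike the projective-plane game, the hypercube game lasts $\Theta(n)$ rounds, so one must control the \emph{transient} in which zombies descend from distance $\approx n/2$ to their floors, and show that an adaptive survivor cannot keep enough zombies away from the floor indefinitely, nor profit from occasional passing (which only swaps one parity class for the other while drawing every zombie one step closer). Second, even granting that all reachable zombies reach their floors, one must show that their blocked coordinates actually \emph{cover} all of $[n]$, not merely number at least $n$; this is a coupon-collector-type spreading statement, which I expect to make work via the dynamic observation that each coordinate the survivor flees along immediately becomes blocked for many subsequent rounds, so no single direction can remain free. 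Quantifying both points against every survivor strategy is the crux, and the $\omega\sqrt n$ slack together with (\ref{eq:Bin_lower}) is precisely what is needed to push the coverage past the sharp threshold $n$ with probability tending to $1$.
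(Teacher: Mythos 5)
Your lower bound (the direction $s_k(Q_n)\to1$ for $k=\frac23 n-\omega\sqrt n$) is essentially the paper's Lemma~\ref{lemm1}: the parity of the distance to each zombie is invariant because the zombie's move contributes $-1$ and the survivor's flip $\pm1$ each round, Chernoff splits the zombies into two parity classes each of size below $n/3$, a zombie one step away (after the zombies' move) forbids one coordinate and one two steps away forbids two, so fewer than $\frac{n}{3}+2\cdot\frac{n}{3}=n$ coordinates are ever forbidden and an escape coordinate always exists. Your weighted count $n_e+2n_o$, with the survivor choosing his starting parity to put the weight $2$ on the smaller class, is exactly the paper's winning condition~(\ref{eq:cond_hyper}), $n>2\min\{X,k-X\}+\max\{X,k-X\}$, and your plan to sharpen the error term to $\Theta(\sqrt n)$ via (\ref{eq:Bin_lower}) is precisely what the paper does after Lemma~\ref{lemm2}. (Two small points: you should also have the survivor start at distance at least $2$ from every zombie, which is trivial since $k(n+1)<2^n$; and no ``driving zombies to their floors'' is needed --- the blocking count bounds the forbidden set at every round no matter where the zombies are.)

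The genuine gap is the upper bound, which you yourself leave open (``the crux''), and the route you propose --- an a.a.s.\ coupon-collector/coverage statement that the floored zombies' blocked coordinates cover all of $[n]$ --- aims at the wrong target. Against an adaptive survivor, no such coverage can be forced with high probability in bounded time: as long as even one coordinate agrees between the survivor and \emph{all} zombies, the survivor can flip it and maintain every distance, and a multiset of at least $n$ wrong coordinates spread over $k\approx\frac23 n$ zombies does not prevent such a fully-agreeing coordinate from persisting; the survivor will steer to keep one alive, so coverage can only arise from coincidences in the zombies' random choices, which have polynomially small probability per round, not probability $1-o(1)$. The paper's Lemma~\ref{lemm2} sidesteps this entirely: since each distance is non-increasing over a full round, it suffices to force, from \emph{any} configuration, a strict decrease of some coordinate of the distance vector within at most $n$ steps with probability at least $\delta(n)>0$, where $\delta(n)$ may be as small as $(1/n)^{k(n-1)}$. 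Either no coordinate agrees with all zombies, and then any survivor move (or pass) decreases some distance for free; or, with probability at least $(1/n)^k$ in one round, all but one zombie flip the survivor's last-flipped bit while a designated zombie flips a bit in which at least two zombies differ --- such a bit exists by pigeonhole, since the parity classes give at least $n/3$ zombies with at least one and at least $n/3$ zombies with at least two wrong bits besides the last flipped one, hence at least $n$ wrong bits among at most $n-1$ coordinates --- and this strictly shrinks the set of fully-agreeing coordinates. Concatenating disjoint time windows, strict decreases occur infinitely often with probability $1$, which is impossible for a bounded non-increasing vector unless the survivor is eaten. Note that this argument never requires the zombies to reach their parity floors, so your ``transient'' difficulty dissolves, and it yields capture with probability $1$ (not merely greater than $1/2$) whenever condition~(\ref{eq:cond_hyper}) fails, which is exactly what makes the $\Theta(\sqrt n)$ sharpening via (\ref{eq:Bin_lower}) go through.
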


\noindent
As we did for the incidence graphs of projective planes, we will first prove the following weaker statement
$$
\frac {2n}{3} - \omega\sqrt{n} \le z(Q_n) \le \frac {2n}{3} + \omega\sqrt{n},
$$
where $\omega = \omega(n)$ is any function tending to infinity as $n\to \infty$. We will then discuss how to improve the error term.  We prove the lower and upper bounds independently, with the lower bound addressed first.

\begin{lemma}\label{lemm1}
Let $k = \frac 23 n -  \omega \sqrt{n}$, where $\omega = \omega(n)$ is any function tending to infinity as $n\to \infty$. Then $s_k(Q_n) \to 1$, as $n \to \infty$.
\end{lemma}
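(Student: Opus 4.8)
The plan is to exhibit, with probability tending to $1$ over the random initial placement of the $k=\tfrac23 n-\omega\sqrt n$ zombies, an explicit survivor strategy that is never caught. The first step is a change of viewpoint that makes the dynamics transparent. Identify every vertex with a bit string and, as the game proceeds, track for each zombie $j$ the set $U_j$ of coordinates on which that zombie currently disagrees with the survivor; the zombie eats the survivor exactly when $|U_j|=0$. In one round the zombie first flips a uniformly random coordinate of $U_j$ (decreasing $|U_j|$ by $1$), and then the survivor, by flipping a single coordinate $i$, toggles membership of $i$ in every $U_j$ simultaneously. Thus a survivor move changes each $|U_j|$ by $+1$ (if $i\notin U_j$) or $-1$ (if $i\in U_j$), so the net change of $|U_j|$ over a round is $0$ or $-2$; in particular passing only decreases every $|U_j|$, so the survivor never passes and each $|U_j|$ is non-increasing. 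Consequently zombie $j$ can be ``hit'' (suffer a $-2$ round) at most $d_j/2$ times before catching the survivor, where $d_j=|U_j|$ initially is $\mathrm{Bin}(n,1/2)$-distributed. By Chernoff's bound, a.a.s.\ every $d_j$ lies in $n/2\pm O(\sqrt{n\log k})$ and every coordinate initially lies in $(1/2+o(1))k$ of the sets $U_j$.

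Next I would describe the survivor's strategy. Call a zombie \emph{endangered} once $|U_j|$ drops below a suitable threshold, and let $E$ be the current endangered set. Each round the survivor flips a coordinate $i$ lying in no $U_j$ with $j\in E$ (equivalently, a coordinate on which the survivor agrees with every endangered zombie); such a flip leaves $|U_j|$ unchanged for all $j\in E$ and forces the unavoidable $-2$ rounds onto the abundant high-weight zombies, which can absorb them. Such a coordinate exists precisely when the sets $\{U_j:j\in E\}$ do not cover all $n$ coordinates, so the governing invariant to maintain is that the endangered sets never cover $[n]$. The point is that, while the high-weight zombies are gradually driven down and accumulate at the bottom, the survivor keeps $\min_j|U_j|$ from ever reaching $0$; and once enough zombies have low weight that some coordinate is permanently free (lies in no $U_j$), the survivor flips such free coordinates forever, freezing every $|U_j|$ and winning. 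The whole game is thereby a race between the zombies descending toward weight $0$ and the coordinates draining, under the random bit-removals, toward being free.

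For the analysis I would use the initial Chernoff estimates to start the invariant, and then control the evolving profile of the weights $|U_j|$ and of the column occupations $c_i=|\{j:i\in U_j\}|$ as the strategy is played. The aim is to show that the survivor can route the forced hits so that the endangered set stays uncoverable throughout the $O(n\log n)$ rounds of the critical phase, after which the steady state sustains itself; this would be carried out by a differential-equation / supermartingale description of the weight and column distributions, together with Azuma-type concentration and a union bound over the rounds. The reduction to bit strings and the monotonicity of the $|U_j|$ make the bookkeeping clean, and the initialisation is routine.

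The hard part is the coupled, second-order nature of the feasibility condition. Because one survivor move toggles a coordinate in every $U_j$ at once, the $k$ weight processes are far from independent, so the zombies cannot be treated one at a time. Moreover the crude aggregate budget is a dead heat: the total number of hits the survivor is forced to inflict equals $\tfrac12\big(\sum_j d_j-\sum_j f_j\big)$, where $f_j$ denotes the eventual value of $|U_j|$, and this matches the total budget $\sum_j d_j/2$ to leading order for every $k$. Hence the threshold does not come from counting hits but from whether they can be \emph{distributed} without any single $|U_j|$ hitting $0$ while the covering invariant holds. Pinning this down, and in particular showing that the required control is exactly maintainable up to $k\sim\tfrac23 n$ and no further, is where the careful concentration analysis of the column-count distribution must do the real work; this is the step I expect to be the main obstacle.
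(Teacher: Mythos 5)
Your reformulation (disagreement sets $U_j$, zombie removes a uniformly random element of $U_j$, survivor's flip toggles $i$ in every $U_j$, net change $0$ or $-2$ per round) is correct, but the proof has a genuine gap that you yourself flag: the entire argument rests on maintaining the invariant that the endangered sets never cover $[n]$, and you leave the maintenance of that invariant — the routing of forced hits, the differential-equation/supermartingale control of the weight and column profiles, and the emergence of the threshold $2n/3$ — as ``the main obstacle.'' As written, nothing in the proposal establishes that the invariant survives even one critical phase, nor explains where $2/3$ comes from, so the statement is not proved. Moreover, the dynamic load-balancing you envision is likely both harder than necessary and misdirected: the survivor cannot steer which high-weight zombies absorb the $-2$ hits (that is determined by whether the chosen coordinate $i$ happens to lie in each $U_j$ after the zombies' random removals), so the coupled profile analysis you sketch would have to confront exactly the adversarial coupling you identify, with no mechanism offered to tame it.

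The missing idea is a parity invariant that collapses all of this bookkeeping into a static count. Since $Q_n$ is bipartite and every zombie moves in every round, the zombies split into two groups according to the parity of their distance to the survivor, and these two group sizes never change during play; by Chernoff's bound applied to the initial placement, with $k=\frac23 n - \omega\sqrt n$ each group a.a.s.\ has size strictly less than $n/3$. Now at any round, the only zombies that constrain the survivor's move are those at distance $1$ after the zombies' move (each forbids the one coordinate in its $U_j$) and those at distance $2$ (each forbids its two coordinates) — and these lie in \emph{opposite} parity classes. Hence the number of forbidden coordinates is less than $\frac n3\cdot 1+\frac n3\cdot 2=n$, so a safe coordinate always exists and the survivor maintains distance at least $2$ from every zombie forever. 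No weight process ever needs to be driven to a steady state and no concentration over $O(n\log n)$ rounds is required: the single initial Chernoff estimate is preserved deterministically by the parity invariant. This also answers your observation that the threshold ``does not come from counting hits'': it comes from the weighted count $2\min\{X,k-X\}+\max\{X,k-X\}<n$ (where $X$ is the size of one parity class), which with $X\approx k/2$ reads $\frac{3k}{2}<n$, i.e.\ $k<\frac23 n$ — this is exactly the refinement the paper uses to pin down the $\Theta(\sqrt n)$ error term. I would recommend replacing the evolving-profile machinery with this parity argument; your $U_j$ formalism is a perfectly good language in which to write it.
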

\begin{proof}
Our goal is to show that a.a.s.\ the survivor can avoid being captured when playing against $k$ zombies. First, observe that by Chernoff's bound, a.a.s.\
$$
k/2 + O(\sqrt{\omega n}) = n/3 - (1+o(1)) \omega \sqrt{n} / 2 < n/3
$$
zombies start on vertices having an even number of ones in their binary representations (and, as a result, also less than $n/3$ zombies start with an odd number of ones). Since all zombies
continuously move, this property will hold throughout the game. Hence, the survivor, independently of whether he moves or not, has always an even distance to less than $n/3$ zombies, and also an odd
distance to less than $n/3$ zombies.

The survivor's strategy is the following: he picks as a starting vertex an arbitrary vertex at distance at least $2$ from all the zombies. (This can be easily done as there are, trivially, at most
$k (n+1) < 2^n$ vertices at distance at most 1 from some zombie.) If, immediately after the zombies' move, no zombie is at distance $1$, then the survivor stands still. On the other hand, if the
survivor has a zombie in their neighbourhood, then he wants to move to a safe vertex that is not occupied nor adjacent to any zombie.

Note that there are less than $n/3$ zombies at distance $1$ from the survivor and also less than $n/3$ zombies at distance $2$. Moreover, each zombie at distance $1$ forbids one coordinate, and each zombie at distance $2$ forbids two coordinates, so less than $n$ coordinates are forbidden in total. Hence, the survivor has at
least one coordinate to escape to, and survives for at least one more round. The survivor continues applying the same strategy, and the proof is finished.
\end{proof}

Next, we consider the upper bound.

\begin{lemma}\label{lemm2}
Let $k = \frac 23 n + \omega \sqrt{n}$, where $\omega = \omega(n)$ is any function tending to infinity as $n\to \infty$. Then $s_k(Q_n) \to 0$, as $n \to \infty$.
\end{lemma}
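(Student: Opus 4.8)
The plan is to establish the mirror image of the lower bound. Where the proof of Lemma~\ref{lemm1} showed that fewer than $n/3$ zombies lie in each parity class, here a.a.s.\ strictly more than $n/3$ zombies lie in each class, and the whole task is to leverage this surplus to trap the survivor. First I would record the parity bookkeeping. By Chernoff's bound, a.a.s.\ the number of zombies starting on even-weight vertices and the number starting on odd-weight vertices are both $k/2 + O(\sqrt{\omega n}) = n/3 + (1+o(1))\omega\sqrt{n}/2 > n/3$. Since every zombie moves in every round, each zombie flips its parity each round, so these two class sizes are preserved (they merely swap labels). Consequently, at every round the survivor is at odd distance from more than $n/3$ zombies and at even distance from more than $n/3$ zombies.

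Next I would reformulate capture in terms of blocked coordinates, exactly as in the proof of Lemma~\ref{lemm1}. Immediately after the zombies move in a round, a zombie at distance $1$ forbids the survivor from passing and forbids the single coordinate in which they differ, a zombie at distance $2$ forbids the two coordinates in which they differ, and all other zombies impose no constraint. The survivor is therefore eaten (at once if he steps onto a zombie, otherwise one round later) precisely in a round where passing is forbidden and the forbidden coordinates exhaust all of $[n]$. The threshold $2n/3$ is exactly the value of $k$ at which the blocking capacity $\#\{\text{distance-}1\} + 2\cdot\#\{\text{distance-}2\}$ of a fully clustered horde, namely $(k/2) + 2(k/2) = 3k/2$, first reaches $n$; with $k = 2n/3 + \omega\sqrt{n}$ this capacity exceeds $n$ by $\Omega(\omega\sqrt{n})$, giving the slack needed to absorb overlaps.

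The remaining and principal task is to show that a.a.s.\ the zombies actually realize such a trapping configuration. I would argue in two stages. In the \emph{forcing} stage, I track a potential counting how many zombies sit at distance at most $2$ and show that the survivor cannot keep this quantity small forever: he can repel any one zombie by at most one unit of distance per round while that zombie closes in by one, yet he must simultaneously keep away from more than $n/3$ zombies of each parity, all relentlessly closing in (and he can pass at most $n$ times in total). This drives him into a regime where a $(1-o(1))$-fraction of the surplus horde is pinned at distance $1$ or $2$. In the \emph{covering} stage, I would run a coupon-collector argument in the spirit of the proof of Lemma~\ref{lem2}: each clustered zombie, through the uniformly random choice of which differing coordinate to flip along its geodesic, contributes an essentially uniform forbidden coordinate, so once the blocking capacity exceeds $n$ by $\Omega(\omega\sqrt{n})$ the forbidden set covers $[n]$ with probability $1 - o(1)$. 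The main obstacle is the interaction of these two stages against an \emph{adaptive} survivor: one must guarantee that enough zombies are at distance at most $2$ in the \emph{same} round and that their random forbidden coordinates \emph{jointly} cover $[n]$, while the survivor actively tries to preserve a single free coordinate and to keep the horde spread out. Controlling this adaptivity, rather than the parity count or the coupon-collector estimate (both routine), is where the real difficulty lies; the sharpening of the error term to $\Theta(\sqrt{n})$ promised in Theorem~\ref{hyper} I would defer to a separate, more careful accounting once the $\to 0$ statement is secured.
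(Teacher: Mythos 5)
Your parity bookkeeping is correct and your identification of the $3k/2$ blocking-capacity threshold matches the accounting the paper itself uses in condition~(\ref{eq:cond_hyper}), but the heart of your plan --- the ``forcing'' stage --- is both unproven and rests on a false premise, and this is a genuine gap rather than a deferrable detail. The survivor is never driven into a clustered regime by any repelling or potential argument: as long as there remains even one coordinate in which all $k$ zombies agree with him, he can flip that coordinate and thereby maintain his distance to \emph{every} zombie simultaneously, round after round, indefinitely. So the reasoning ``he can repel any one zombie by at most one unit per round while that zombie closes in by one'' does not force distances down; no zombie need ever get closer, and the distance vector can stay constant forever if the zombies' random choices happen to preserve such a safe coordinate. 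The real question is whether the zombies' random geodesic choices eventually destroy all safe coordinates, and your proposal supplies no mechanism for this. The covering stage is likewise unavailable as stated: the coordinate a zombie blocks is uniform over the set of coordinates in which it \emph{differs} from the survivor, not over $[n]$, so the draws are neither uniform nor independent of the survivor's position, and against an adaptive survivor there is no single round over which to run a coupon-collector computation --- you flag this adaptivity yourself as unresolved, which concedes precisely the step that constitutes the proof.

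The paper closes this gap by a softer mechanism that your plan never exploits: the distance vector $\vec{d}(t)$ is coordinate-wise non-increasing, so it suffices to show that from \emph{any} configuration there is a \emph{positive} (possibly exponentially small in $n$) probability that some coordinate strictly decreases within a bounded number of rounds; concatenating disjoint time windows then boosts this to capture with conditional probability $1$, and no high-probability covering event is ever needed. Concretely: if after the zombies' move there is no safe coordinate, some zombie gets closer for free. Otherwise, with probability at least $(1/n)^k$ in one round, all but one zombie mirror the survivor's last flip while the remaining zombie flips a coordinate in which at least two zombies differ from the survivor; such a coordinate exists by pigeonhole, because the two parity classes contribute at least $n/3 \cdot 1 + n/3 \cdot 2 \ge n$ differing bits besides the last flipped one (this is exactly where $k > 2n/3$ enters), and this strictly reduces the number of safe coordinates. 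Iterating at most $n-1$ times yields probability at least $(1/n)^{k(n-1)} > 0$ of a strict decrease of $\vec{d}(t)$, which by monotonicity is all that is required. In short, the two stages you correctly identify as the real difficulty are precisely what the monotonicity observation allows the paper to bypass, and without that (or an honest substitute for your forcing and joint-covering claims) your argument does not establish the lemma.
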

\begin{proof}
This time, our goal is to show that a.a.s.\ $k$ zombies can win. As before, it follows from Chernoff's bound that a.a.s.\ at least $n/3$ zombies start in both positions having an even number of ones
and an odd number of ones, and this property remains true throughout the game. Denote by $d(j,t)$ the graph distance between the $j$th zombie and the survivor after the $t$-th round, and let
$\vec{d}(t)$ be the corresponding $k$-dimensional vector of distances at time $t$. Since $\vec{d}(t)$ is coordinate-wise non-increasing, it suffices to show that given any starting position (for both the survivor
and the zombies) there is a positive probability that after a finite number of steps the distance vector decreases in at least one coordinate. Indeed, suppose that, independently of the starting
position, with probability $\delta > 0$ (observe that $\delta$ might be a function of $n$ that tends to zero as $n \to \infty$) after $T(n)$ steps the distance vector decreases, where $T(n)$ is some function of $n$. By concatenating disjoint intervals of length $T(n)$, the probability
of having a strictly decreasing distance vector can be boosted as high as desired.

To show this, observe the following: if immediately after the zombies' move there is no coordinate in which all zombies have the same binary value as the survivor, then, regardless of what the survivor
does in the next round, at least one zombie will become closer to the survivor ``for free'' (that is, there exists $1 \leq j \leq k$ such that after the $t$-th round we have $d(j,t) < d(j,t-1)$).
Otherwise, suppose that there exist $1\le C \le n-1$ coordinates such that all zombies have the same binary value in this coordinate as the survivor, and the latter one can maintain the distances to all
zombies by flipping the bit corresponding to any such coordinate. In the next round, consider the following strategy: all but one zombie flip the bit recently flipped by the survivor, and the
remaining zombie flips a coordinate in which he is not the only zombie differing from the survivor in that bit. Note that this is indeed possible, since by our assumption, after the survivor's
move, at least $n/3$ zombies differ in least one bit (other than the last one flipped by the survivor), and at least $n/3$ zombies differ in least two bits (again, other than the last one flipped by
the survivor). 

Therefore, the total number of bits in which zombies differ is at least $n$, and so, by the pigeonhole principle, there exists a bit (one more time, other than the last one flipped by
the survivor) in which at least two zombies differ. With probability at least $(1/n)^k > 0$ the zombies choose this strategy, and if they do so, in the next round there are less coordinates in which
all zombies have the same binary value as the survivor. It follows that with, probability at least $(1/n)^{kC} \ge (1/n)^{k(n-1)}  > 0$, the zombies follow this sequence of strategies, and then the survivor is forced to
choose a coordinate in which the distance to at least one zombie decreases.  Since this holds independently of the distance vector, the distances eventually decrease, and the survivor is eaten with
probability 1. The proof is finished.
\end{proof}

The proof of Theorem~\ref{hyper} now follows by Lemmas~\ref{lemm1} and \ref{lemm2}. With more effort, we can obtain the order of the error term. Suppose that the survivor plays against $k$ zombies. As mentioned in the
proofs of Lemmas~\ref{lemm1} and \ref{lemm2}, $X \in \Bin(k,1/2)$ zombies start on vertices having an even number of ones in their binary representations; and $k-X$ zombies start on vertices with an odd number of ones. The random
variable $X$ determines the faith of the survivor. Since zombies at even distance to the survivor (right before they move) have more power, the survivor should choose the starting point accordingly.
If $X > k/2$, then he should choose a vertex with an even number of ones to start with; if $X < k/2$, then a vertex with an odd number of ones should be picked instead. It follows that the survivor wins if
\begin{eqnarray}
n > 2 \min\{ X, k-X \} + \max\{ X, k-X \} &=& 2 \left( \frac {k}{2} - \left| X - \frac {k}{2} \right| \right) + \left( \frac {k}{2} + \left| X - \frac {k}{2} \right| \right) \label{eq:cond_hyper} \\
&=& \frac {3k}{2} - \left| X - \frac {k}{2} \right|; \nonumber
\end{eqnarray}
and otherwise, he loses with probability 1.

Suppose that the survivor plays against $k = \frac 23 n + b \sqrt{n}$ zombies, where $b>0$ is a constant that will be determined soon. It follows from~(\ref{eq:Bin_lower}) that with probability, say, at least
0.9, $|X-k/2| \ge c\sqrt{n}$ for some small, universal, constant $c >0$. Hence, with probability at least 0.9, the condition~(\ref{eq:cond_hyper}) holds, provided that, say, $b < c/2$. On the other
hand, it follows from Chernoff's bound that with probability at least 0.9, $|X-k/2| \le b\sqrt{n}$, provided that $b$ is a large enough constant, and then the condition~(\ref{eq:cond_hyper}) fails. 

\medskip

Condition~(\ref{eq:cond_hyper}) can be also used to investigate the value of $z(Q_n)$ for small values of $n$. In particular, we find that for $n=3$ we have $z(Q_3)=2$ (in fact, $s_2(Q_3)=1/2$),
and for $n=4$, we have $z(Q_4)=3$ (in fact, $s_3(Q_4)=1/4$).

\section{Grids}

In this final section, we consider the zombie number of various grids formed by Cartesian products of graphs. We denote throughout by $G_n$ the $n \times n$ square grid, which is the graph isomorphic to
$P_n \square P_n$, where $P_n$ is the path with $n$ vertices. Our first result of the section focusses on these Cartesian grids.
\begin{theorem}
For $n\ge 2,$ we have that $z(G_n) =2$. Hence, $Z(G_n) =1$.
\end{theorem}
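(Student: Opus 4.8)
The plan is to establish the two inequalities $z(G_n)\ge 2$ and $z(G_n)\le 2$ separately. The lower bound is immediate from the definition, since $z(G_n)\ge c(G_n)$, and the cop number of the square grid equals $2$ for every $n\ge 2$: indeed $G_n=P_n\square P_n$ is a Cartesian product of two trees, so $c(G_n)=\lceil (2+1)/2\rceil=2$ by the result of~\cite{mm} used already in Section~\ref{sechyper} (and for $n=2$ this is consistent with $G_2=C_4$ and $z(C_4)=2$ from Theorem~\ref{cycle}). It therefore remains to show $s_2(G_n)\le 1/2$. I expect in fact to prove the stronger statement $s_2(G_n)=0$, i.e.\ that two zombies eat the survivor with probability $1$ from any configuration, which gives $z(G_n)\le 2$ and hence $z(G_n)=2$ and $Z(G_n)=z(G_n)/c(G_n)=1$.

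For the upper bound I would argue in the spirit of Lemma~\ref{lemm2}. Identify $V(G_n)$ with $\{1,\dots,n\}^2$, write the survivor's position as $(x,y)$ and the zombies' positions as $(a_i,b_i)$ for $i=1,2$, and let $d_i$ be the taxicab distance between the $i$th zombie and the survivor. Since each zombie moves on a geodesic and decreases $d_i$ by exactly $1$ in its move, while a survivor move changes each $d_i$ by at most $1$, the vector $(d_1,d_2)$ is coordinate-wise non-increasing from round to round, and the survivor is eaten as soon as some $d_i$ reaches $0$. As in Lemma~\ref{lemm2} it then suffices to prove a uniform progress estimate: there exist $T=T(n)$ and $\delta=\delta(n)>0$ such that from every non-capture configuration, with probability at least $\delta$ at least one of $d_1,d_2$ strictly decreases within the next $T$ rounds, no matter how the survivor plays. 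Concatenating disjoint blocks of $T$ rounds then drives $d_1+d_2$ (a non-negative integer bounded by $4(n-1)$, non-increasing) down to $0$ with probability $1$.

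The heart of the matter is the per-round analysis, carried out relative to the \emph{post-move} zombie positions. After the zombies move, the survivor can keep both distances from dropping only if it can step strictly away from both new zombie positions simultaneously; such a common escape direction exists precisely when the two zombies do not straddle the survivor in both coordinates and the required step is not blocked by the boundary. I would first observe that, in the interior of the grid, if the two zombies sit in a common quadrant relative to the survivor then an escape direction always exists, and each use of it pushes the survivor one step toward the corresponding corner (the zombies cannot cross a matched coordinate on their own, so they stay in that quadrant). Hence the survivor can maintain its distances for at most $O(n)$ consecutive rounds before being driven onto the boundary. I would then show that once the survivor is forced onto an edge or into a corner, the zombies that must turn the corner to keep following acquire a genuine random tie-break, and with probability bounded below (for instance $1/2$ in the clean corner showdown, where two zombies one step from a corner independently land on the two available neighbours of the corner) the survivor is left with no legal escape and some $d_i$ decreases. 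Combining these two observations yields the uniform progress estimate with $T=O(n)$.

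The main obstacle I anticipate is exactly this boundary analysis: proving that the survivor genuinely cannot postpone a loss of distance indefinitely, and that at the boundary the zombies' random choices give a strategy-independent, strictly positive lower bound on the probability of progress. This requires a careful case distinction according to how the two zombies sit relative to the survivor and to the boundary (both with a matched coordinate and hence moving deterministically, both in a quadrant, straddling in exactly one coordinate, and so on), together with care that the lower bound on the progress probability does not degrade to $0$ (it may depend on $n$, which is harmless). Once the uniform estimate is in hand, the concatenation argument gives $s_2(G_n)=0\le 1/2$, whence $z(G_n)=2$ and $Z(G_n)=1$.
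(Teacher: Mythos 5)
Your overall skeleton is the same as the paper's: both arguments reduce the upper bound to showing that, from an arbitrary configuration, the coordinate-wise non-increasing distance vector strictly decreases within $T(n)$ rounds with probability at least $\delta(n)>0$ uniformly over the survivor's play, and then concatenate disjoint blocks exactly as in Lemma~\ref{lemm2}; the lower bound via $c(G_n)=2$ from~\cite{mm} is also what the paper uses (implicitly). The genuine gap is in how you realize the progress event, and it is precisely the part you defer as ``the main obstacle.'' Your analysis covers the common-quadrant case (where progress is indeed deterministic within $O(n)$ rounds, ending at a corner) and a ``clean corner showdown'' to which you assign probability $1/2$; but the survivor decides whether such a showdown ever occurs, and the alignment configurations you omit defeat the narrative that the survivor ``can maintain its distances for at most $O(n)$ consecutive rounds before being driven onto the boundary'' followed by a boundedly likely loss. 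Concretely: if both zombies are exactly in the survivor's column, trailing below, the survivor runs to the top edge and then side-steps, which strictly increases his distance to both aligned zombies; whenever the zombies happen to be in single file along the boundary behind him at the moment he reaches a corner, he turns the corner without losing any distance, and the chase can circulate the boundary of the grid indefinitely. In that regime he is already on the boundary yet preserves all distances for arbitrarily many rounds; progress comes from a different mechanism than your showdown, namely some zombie failing to be exactly on the boundary line when the survivor reaches the next corner, an event whose probability per circuit can be as small as $2^{-\Omega(n)}$. An $n$-dependent $\delta$ is harmless, as you note, but the two mechanisms you actually exhibit (quadrant cornering and the probability-$1/2$ showdown) do not yield a survivor-strategy-independent bound covering these configurations, so the heart of the proof is missing as written.

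The paper sidesteps the case analysis entirely with one uniform device: whenever a zombie has a genuine tie-break, it prescribes a single response coupled to the survivor's last move (mirror the survivor's choice of axis, or switch axis when an alignment was just broken), observes that both zombies follow this prescription for $n$ consecutive rounds with probability at least $(1/2)^{2n}$, and argues that under this event the survivor is herded into a corner or forced to decrease some distance within those $n$ rounds, regardless of his strategy. That single coupled event handles all relative configurations at once, including the single-file boundary chases above, which is exactly the missing idea in your plan. To complete your route you would need either to carry out the full position-and-boundary case analysis with an explicit uniform $\delta(n)$, or to adopt some such explicitly prescribed joint zombie behaviour as the paper does.
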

\begin{proof}
Since $c(G_n)=2$, it suffices to show that two zombies win the game with probability 1.  As in the proof for hypercubes, our goal is to show that, starting from any distance vector and independently from the
configuration of zombies and survivor, with probability $\delta(n) > 0$ the distance vector strictly decreases after some number of steps (which is a function of $n$). Consider the following strategy: if a zombie is not forced to move
in one direction as he shares one coordinate with the survivor (this can happen deterministically during at most $n$ consecutive steps), then he does the following: if before the survivor's last
move the zombie and the survivor shared, say, the $x$-coordinate and the survivor moved vertically, then the zombie moves then horizontally. If before the survivor's last move he did not share
neither $x$ nor $y$-coordinates, and the survivor moved horizontally (vertically, respectively), then the zombie moves also horizontally (vertically, respectively). Note that with probability at
least $(1/2)^{2n}
> 0$ both zombies follow this strategy during $n$ consecutive rounds and, as in the section devoted to hypercubes, this probability can be boosted as high as desired. If both zombies follow this strategy, we
see that after $n$ steps the distance from the survivor to the zombies strictly decreases (the survivor either gets eaten in a corner or has to move towards the zombies, decreasing distances).
Iterating the same argument, we see that the survivor gets eaten with probability $1$.
\end{proof}

An analogous strategy for two zombies can be adapted to win on a Cartesian product of trees, giving again the cost of being undead equal to 1. Indeed, since each factor of the product is a tree, there is for each
coordinate exactly one shortest path between any two vertices, and as there is no diagonal shortcut possible (that is, the shortest path from $(a_i,b_i)$ to $(a_j,b_j)$ always has to pass through
$(a_i,b_j)$ or $(a_j,b_i)$), the survivor cannot forever maintain distances in both coordinates.

\medskip

We next consider grids formed by products of cycles.  Let $T_n$ be the \emph{toroidal grid} $n\times n$, which is isomorphic to $C_n \square C_n.$ For simplicity, we take the vertex set of $T_n$ to
consist of $\Z_n \times\Z_n$. The analysis of toroidal grids is more delicate than in the Cartesian case, and we present here a lower bound for the zombie number of $T_n.$

\begin{theorem}\label{thm:torus}
Let $\omega = \omega(n)$ be a function tending to infinity as $n\to \infty$. Then a.a.s.\ $z(T_n) \ge \sqrt n/(\omega\log n)$.
\end{theorem}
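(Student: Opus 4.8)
The plan is to prove the bound by exhibiting, for $k:=\lfloor \sqrt n/(\omega\log n)\rfloor$ zombies, a survivor strategy that wins with probability strictly greater than $1/2$. Since $z(T_n)=\min\{k\ge c(T_n):s_k(T_n)\le 1/2\}$ and $s_k(T_n)$ is non-increasing in $k$, establishing $s_k(T_n)>1/2$ immediately gives $z(T_n)>k$, which is the claim. (The ``a.a.s.'' in the statement should be read as an assertion about $n\to\infty$; $z(T_n)$ itself is deterministic, and the randomness lives inside the game.) I work in the coordinates $\Z_n\times\Z_n$, where the distance between two vertices is the sum of the two cyclic coordinate-distances, so that a zombie whose two coordinate-gaps to the survivor are both strictly positive must, when it moves, choose uniformly at random which of the two gaps to decrease. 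Apart from the initial placement, these independent fair coin flips are the only source of randomness in the dynamics, and the survivor's entire hope is to exploit them.

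First I would condition on a high-probability event about the uniform initial placement of the $k$ zombies. Since $k=o(\sqrt n)$, a first-moment bound shows that a.a.s. the survivor can start at a vertex $v$ whose distance to every zombie, and indeed every pairwise distance between zombies, exceeds some $R=\Theta(n/k)=\Theta(\sqrt n\,\omega\log n)$: a ball of radius $R$ contains $O(R^2)$ of the $n^2$ vertices, so the expected number of zombie pairs (and zombie--$v$ pairs) within distance $R$ is $O(k^2R^2/n^2)=o(1)$. On this event every zombie begins far from the survivor and the zombies begin far from one another, so at most one zombie is ever ``locally relevant'' at a time. The survivor then flees: against the currently nearest zombie it moves so as to keep its distance to that zombie from decreasing (each round the zombie closes the gap by exactly one, while the survivor can open it by one), relabelling or switching its flight direction as zombies pass between lying \emph{behind} it — where they settle into harmlessly trailing at a constant distance — and lying \emph{ahead} of it, where they become dangerous.

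The heart of the argument, and the main obstacle, is to show that on the good initial event the survivor can maintain positive distance to every zombie forever. For a single zombie I would track the coordinate transverse to the survivor's current flight: under the fleeing strategy this transverse gap is driven by the zombie's fair tie-breaks and performs a lazy random walk on $\Z_n$, and the zombie becomes dangerous only if this walk carries it into alignment with the survivor's track before it locks into harmless trailing — an event I would control by a hitting/large-deviation estimate for a walk started at distance $\gtrsim R$. A union bound over the $k$ zombies must then keep the total probability of any interception, or of two zombies forming a \emph{pincer} (the situation in which fleeing one zombie drives the survivor toward another), below $1/2$. The delicate points are precisely (i) ruling out pincers, which is where the large initial separation $R$ is spent, and (ii) pushing the per-zombie interception bound through with enough slack that $k$ copies still sum to $o(1)$. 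The diffusive $\sqrt n$ scale of the transverse random walk set against the torus geometry is what fixes the threshold at order $\sqrt n$, while the factor $\omega\log n$ supplies exactly the room needed both for the union bound over the $k$ zombies and for the initial-spread event to hold a.a.s.
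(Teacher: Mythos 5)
There is a genuine gap, and it sits exactly where you locate ``the heart of the argument'': your central probabilistic mechanism misreads the dynamics. The transverse gap between a zombie and the survivor's track is \emph{not} a lazy random walk --- it is monotone non-increasing. Every zombie move decreases one of its two coordinate gaps to the survivor, and a survivor fleeing in a straight line never changes the transverse coordinate; so while both gaps are positive, the transverse offset shrinks at expected rate $1/2$ and hits zero with probability $1$. For any zombie lying ahead, this alignment happens after about twice its transverse offset, i.e.\ $O(n/k)=O(\sqrt{n}\,\omega\log n)$ steps, long before the horizontal gap (closed at combined speed at least $1$, up to $n/2$ initially) runs out. Head-on encounters are therefore \emph{certain}, not rare events controllable by hitting or large-deviation estimates, and no choice of initial separation $R$ rescues the claim that a zombie ``becomes dangerous only if this walk carries it into alignment.'' The ``diffusive $\sqrt n$ scale'' heuristic you invoke to explain the threshold is consequently not the mechanism at all. (A smaller quantitative slip is symptomatic: with $R=\Theta(n/k)$ the expected number of zombie pairs within distance $R$ is $\Theta(k^2R^2/n^2)=\Theta(1)$, not $o(1)$; that is fixable by shrinking $R$ by a divergent factor, but the main objection above is not.)

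Because head-on approaches are unavoidable, the survivor must \emph{actively convert} each incoming zombie into a trailer, and this is the actual content of the paper's proof, of which your plan has no analogue. The paper confines the survivor to a fixed $\Theta(\log n)\times\Theta(\log n)$ box $B$ and has him perform, for each newly arriving zombie, an explicit maneuver ending with that zombie trailing at distance $2$ or $3$; Lemma~\ref{lem:followtorus} (via a.a.s.\ ``regularity'' of the random tie-break sequences, Lemma~\ref{lem:regulartorus}) guarantees trailers keep constant distance through all subsequent turns, provided the trajectory is \emph{stable} (proper, non-$180^\circ$ turns at least $\lfloor 20\log n\rfloor$ steps apart). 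Your sketch needs, but does not supply, exactly such a lemma: your survivor must switch direction whenever a zombie lies ahead, and you give no argument that the zombies already ``harmlessly trailing'' remain harmless through those turns. Moreover, the true source of the $\sqrt n/(\omega\log n)$ threshold is not diffusion but \emph{staggered arrivals}: Lemma~\ref{lem:arrivalB} bounds, by a deterministic preimage-counting argument, the probability that a given zombie reaches $B$ at a given step by $O(t\log n/n^2)$, whence a.a.s.\ no two of the $\binom{k}{2}$ pairs of zombies arrive within $M\log n$ steps of each other precisely when $k^2\log^2 n=o(n)$ --- so each herding maneuver completes before the next zombie shows up. Note finally that the paper's formalism of a priori zombie strategies $(v_0,\bsigma)$ together with the $B$-boxed condition is what decouples the zombies' coin flips from the survivor's trajectory; your survivor reacts to the nearest zombie, creating a feedback loop under which the ``fair coin flips'' driving your proposed walk are not independent of the process you are estimating, so even the estimates you defer could not be pushed through as stated.
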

In order to prove this lower bound on $z(T_n)$, we assume henceforth that there are $k=  \lfloor \sqrt n/(\omega \log n) \rfloor $ zombies, for any given $\omega = \omega(n)$ that tends to infinity as $n\to \infty$. We will find a strategy for the survivor
that allows the survivor to avoid being eaten forever a.a.s.

We introduce some formalism that will be convenient for our descriptions.
It is convenient for the analysis to assume that the game runs
forever, even if some zombie catches the survivor (in which case they will remain together forever).
A \emph{trajectory} is a sequence $\bu=(u_t)_{t\in I}$ of vertices of $T_n$, where $I$ is an interval (finite or
infinite) of non-negative integers corresponding to time-steps. We say that the survivor (or one zombie) follows a trajectory $\bu=(u_t)_{t\in I}$ if, for each $t\in I$, $u_t$ denotes the position
of that survivor or zombie at time-step $t$. Recall that zombies move first, so a zombie with zombie trajectory $\bv$ catches the survivor with trajectory $\bu$ at time-step $t$ if $v_t=u_{t-1}$.
(If $v_t=u_t$ because the survivor moves to the zombie's location, then $v_{t+1}=u_{t}$, so we may interpret this as if the zombie catches the survivor at time-step $t+1$.)
Sometimes it is useful to imagine that the survivor and the zombies move
simultaneously, but the zombies observe the position of the survivor at time $t$ to decide their new position at time $t+1$, whereas the survivor looks at the positions of the zombies at time $t+1$
do decide their new position at time $t+1$. Since the zombies' trajectories may depend on the survivor's trajectory and viceversa, it is convenient to formulate the strategy of each individual
(zombie or survivor) a priori in a way that does not depend on the other player's choices.

A \emph{zombie strategy} is given by $(v_0,\bsigma)$, where $v_0 \in\Z_n\times\Z_n$, $\bsigma =(\sigma_t)_{t\in\N}$ and each $\sigma_t$ is a permutation of the symbols $\tU,\tD,\tL,\tR$ (up, down,
left, right). Each zombie will choose a zombie strategy $(v_0,\bsigma)$ uniformly at random and independently from everything else, and this will determine the zombie's decisions throughout the
process in the following manner. Initially, the zombie starts at position $v_0$. At each step $t\in\N$, the zombie moves from $v_{t-1}$ to $v_t$ (before the survivor moves). To do so, the zombie
picks the first direction in the permutation $\sigma_t$ that decreases its distance to the survivor, and takes a step in that direction. This determines the new position $v_t$.

A \emph{survivor strategy} is given by $(u_0,\bm)$, where $u_0: (\Z_n\times\Z_n)^k \to \Z_n\times\Z_n$, $\bm = (m_t)_{t\in\N}$ and $m_t: (\Z_n\times\Z_n)^{k+1} \to \{\tU,\tD,\tL,\tR\}$.  This strategy
is chosen deterministically by the survivor before the zombie strategies have been exposed, and will determine the decisions of the survivor during the game as follows. Initially, the survivor starts
at vertex $u_0$, which is a function of the zombies' initial configuration. At each time-step $t\in\N$, after the zombies move, the survivor moves from $u_{t-1}$ to $u_t$. The direction of this move
is determined by $m_t$, which is a function of the positions of the zombies and the survivor right before their move.
Note that the strategy of the survivor depends not only on the positions of all players at a given time-step $t$, but also on $t$.
The dependency on $t$ does not provide an essential advantage for the survivor, but makes the description of the argument easier.

The formulation above is useful, since it allows us to decouple all the decisions by the survivor and the zombies prior to the start of the game. Note that the final trajectory of each individual
(survivor or zombie) will depend not only on their own strategy, but will be a deterministic function of all strategies together.

Throughout the section, let $B$ be a fixed $\lfloor K\log n\rfloor \times \lfloor K\log n\rfloor$ box contained in the toroidal grid, where $K= 5 \cdot 10^4$.
A survivor strategy is \emph{$B$-boxed} during the
time period $[0,4n]$ if the following hold: the initial position $u_0$ belongs to the box $B$ and is chosen independently of the positions of the zombies; the sequence of moves
$\bm=(m_t)_{t\in[1,4n]}$ is such that the survivor always stays inside of $B$, regardless of the positions of the zombies in that period; each move $m_t$ ($t\in[1,4n]$) does not depend on the
positions of the zombies that lie outside of $B$ at that given step $t$ (that is, any two configurations of the zombies at time $t$ that only differ in the positions of some zombies not in $B$ must
yield the same value of $m_t$). Later in this section, we will specify a particular $B$-boxed strategy for the survivor that will allow him to survive forever a.a.s.\ The next two lemmas describe the typical behaviour of the zombies before they reach the box $B$, by only assuming that the survivor's strategy is $B$-boxed during the time period $[0,4n]$.

\begin{lemma}\label{lem:arrivalB}
Assume that the survivor's strategy is $B$-boxed during the time period $[0,4n]$, and pick any zombie strategy for all but one distinguished zombie. For any $t\in[1,4n]$, the probability that this
zombie is initially outside of the box $B$ and arrives at $B$ at the $t$-th step of the game is at most $20Kt\log n/n^2$.
\end{lemma}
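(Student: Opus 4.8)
The plan is to reduce the statement to a deterministic geometric fact about a single greedy pursuer and then to a counting estimate. First I would fix the strategies of all zombies other than the distinguished one, together with the survivor's $B$-boxed strategy. Since the strategy is $B$-boxed, each move $m_r$ ignores the positions of zombies lying outside $B$; hence, as long as the distinguished zombie stays outside $B$, the survivor's trajectory is completely determined by the other zombies and does not depend on the distinguished zombie. Thus there is a fixed trajectory $(u_r^*)_{0\le r\le 4n}$, contained in $B$, such that on the event in question (the distinguished zombie starts outside $B$ and first reaches $B$ at step $t$) the distinguished zombie simply performs greedy pursuit of $(u_r^*)$ and enters $B$ exactly at time $t$. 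Since $(u_r^*)$ is independent of the distinguished zombie's random choice $(v_0,\bsigma)$ and $v_0$ is uniform over the $n^2$ vertices, it suffices to bound the number of starting vertices $v_0$ from which greedy pursuit of the fixed trajectory first enters $B$ at step $t$.

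Next I would show that any such $v_0$ lies in a thin shell around $B$. Writing $d(v,B)=\min_{b\in B}d(v,b)$ and $L=\lfloor K\log n\rfloor$, one direction is immediate: the zombie takes $t$ unit steps, so $d(v_0,B)\le d(v_0,v_t)\le t$ because $v_t\in B$. The content is the matching lower bound $d(v_0,B)\ge t-O(L)$; that is, the greedy zombie cannot reach $B$ much faster than one unit of distance per step. Let $D_r$ denote the distance from the zombie to the survivor after round $r$. On its own move the zombie decreases this distance by exactly $1$, and the survivor's subsequent move changes it by $\delta_r\in\{-1,0,+1\}$, so $D_r=D_{r-1}-1+\delta_r$ and $D_0=D_t+t-\sum_{r=1}^t\delta_r$, where $\delta_r=+1$ exactly when the survivor moves away from the zombie. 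The crucial input is that the survivor is confined to $B$: its net displacement in each coordinate is at most the side length of $B$, so the number of ``away'' steps can exceed the number of ``toward'' steps by at most $O(L)$, giving $\sum_{r=1}^t\delta_r\le O(L)$. Since $D_t\le\mathrm{diam}(B)\le 2L$, this yields $D_0\ge t-O(L)$, and as $u_0^*\in B$ we get $d(v_0,B)\ge D_0-\mathrm{diam}(B)\ge t-O(L)$.

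Finally I would count. The vertices with $t-O(L)\le d(v,B)\le t$ form a shell obtained by removing one $L^1$-expansion of the $L\times L$ box $B$ from a slightly larger one; its boundary has length $O(L+t)$ and its width is $O(L)$, so it contains $O(Lt)$ vertices (for small $t$ one uses instead the full set $\{v:d(v,B)\le t\}$, whose size is already $O(Lt)$ once $t\le L$). Dividing by $n^2$ and tracking constants with $L=\lfloor K\log n\rfloor$ produces a bound of the form $20Kt\log n/n^2$, as required. One should also observe that for $t$ exceeding $n+O(L)$ the event is impossible (the distance to the survivor never exceeds $n$ and drops to $O(L)$ within $D_0+O(L)$ steps), so the estimate holds trivially there.

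The main obstacle is the lower bound $\sum_{r\le t}\delta_r\le O(L)$: turning the intuitive statement ``a survivor confined to $B$ can pull the pursuer off-course by only $O(L)$'' into a rigorous inequality requires a per-coordinate analysis on the two cyclic factors, keeping track of which side of the survivor the zombie currently lies on (so that ``toward'' and ``away'' have a consistent meaning) and absorbing into the $O(L)$ error both the $O(L)$ rounds during which the zombie sits within $B$'s coordinate band and the antipodal ambiguities on $C_n$ when a coordinate distance is near $n/2$. The very large value of $K=5\cdot 10^4$ gives ample slack to absorb all such corrections into the constant $20$.
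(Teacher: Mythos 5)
Your reduction to counting initial positions against a fixed survivor trajectory is sound and matches the paper's setup, but the deterministic claim at the heart of your argument is false, not merely hard to prove. It is not true, pathwise, that $\sum_{r\le t}\delta_r \le O(L)$, i.e.\ that first arrival at $B$ at step $t$ forces $d(v_0,B)\ge t-O(L)$. Counterexample: let the survivor oscillate between two horizontally adjacent vertices (a legal $B$-boxed strategy, and the lemma quantifies over all such strategies), and let the distinguished zombie start far below, vertically aligned, with priority permutations $\sigma_i$ that list the relevant horizontal direction first for the first $s$ steps. Each round the zombie spends its guaranteed unit decrease of distance re-aligning horizontally ($d_x\colon 1\to 0$), and the survivor's next oscillation step restores the distance ($\delta_r=+1$), so the vertical distance is frozen: the zombie stalls at essentially constant distance for $s$ rounds, and if the later $\sigma_i$ prioritize vertical moves it then closes in and first enters $B$ at time $t\approx s+d(v_0,B)$, with $s$ as large as $4n$ allows. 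Here the survivor's net displacement is zero yet \emph{every} one of his moves is an ``away'' move: confinement to $B$ bounds net displacement per coordinate, but $\delta_r$ is defined relative to the zombie's \emph{moving} position, so the away-minus-toward count is not controlled by displacement. This is exactly the difficulty you flag in your last paragraph, but it cannot be absorbed into the constant $20$; the shell $\{v: t-O(L)\le d(v,B)\le t\}$ simply does not contain all admissible $v_0$ (and the same stalling also invalidates your closing claim that the event is impossible for $t>n+O(L)$).

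The stalling scenarios do have small probability in $\bsigma$ (each stalled step costs a constant factor), but your bound uses only the uniformity of $v_0$ and is meant to hold for every fixed $\bsigma$, so it cannot exploit this; repairing it would mean integrating over $\bsigma$, i.e.\ a genuinely different proof. The paper avoids the issue by exposing $\bsigma$ first and counting \emph{backwards}: for each vertex $w$ on the inner boundary of $B$ it tracks the preimage sets $V_i(w)$ under the deterministic one-step maps $\nu_i$, showing that vertices not aligned with $u_{i-1}$ have at most one preimage while each $V_i(w)$ contains at most one horizontally and one vertically aligned vertex (each with at most three preimages), whence $|V_{i-1}(w)|\le |V_i(w)|+4$, $|V_0(w)|\le 4t+1$, and $|V_0|\le (4t+1)\cdot 4K\log n\le 20Kt\log n$. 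Note that this count grows linearly in $t$ precisely because of the alignment events that defeat your potential-function estimate; any correct proof must account for them rather than bound them away.
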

\begin{proof}
Fix any $B$-boxed strategy for the survivor and the zombie strategies for the remaining zombies. Expose the sequence $\bsigma$ of move priorities of our distinguished zombie, but not his initial
position $v_0$. Our goal is to show that there is a set of vertices $V_0$ of order $|V_0|\le 20Kn\log n$ such that the event that the zombie arrives at $B$ at step $t$ (and not before) implies that
$v_0\in V_0$. Note that, by the definition of a $B$-boxed strategy, in the event that our zombie reaches $B$ at step $t$ for the first time, then the trajectory $\bu = (u_t)_{t\in[0,t-1]}$ of the
survivor during the time period before step $t$ does not depend on the behaviour of that one zombie, and we can regard $\bu$ as a fixed sequence.

Given any vertex $v\in (\Z_n\times\Z_n) \setminus B$ and $i\in\{1,2,\ldots, t\}$, define $\nu_i(v)$ to be the new position of the zombie at step $i$ assuming it was on vertex $v$ at the end of step
$i-1$ and that the survivor was at $u_{i-1}$. Note that $\nu_i$ is well defined on $(\Z_n\times\Z_n) \setminus B$, given our choice of $\bu$ and $\bsigma$. Let $w$ be any fixed vertex in the inner
boundary of $B$ (that is, $w\in B$ and $w$ is adjacent to some vertices not in $B$). Suppose that the zombie arrives at vertex $w$ at step $t$ and is outside of $B$ before that. For
$i\in\{0,1,\ldots,t-1\}$,  let $V_i(w)$ be the set of vertices in $(\Z_n\times\Z_n) \setminus B$ to which the zombie may move at step $i$. Also, $V_t(w)=\{w\}$ by our assumption, and
${\nu_i}^{-1}  ( V_i(w) ) = V_{i-1}(w)$ for all $1\le i\le t$. We will show that $|V_{0}(w)| \le 4t+1$.

We say that two vertices are \emph{horizontally} (or \emph{vertically}) \emph{aligned} if they share the same  horizontal (vertical) coordinate. Moreover, we say they are aligned if they are either
horizontally or vertically aligned. For $1\le i\le t$, it is immediate to check that if $v$ and $u_{i-1}$ are not aligned, then $|{\nu_i}^{-1}(v)|\le1$. Observe that ${\nu_i}^{-1}(v)$ could be empty
if, for instance, the horizontal distance between $v$ and $u_{i-1}$ is $\lfloor n/2\rfloor$ and $\sigma_i=(\tL,\tR,\tU,\tD)$. Otherwise, if $v$ and $u_{i-1}$ are aligned, then $|{\nu_i}^{-1}(v)|\le
3$ (for instance, if $v$ is horizontally aligned with $u_{i-1}$ and to the left of $B$, then ${\nu_i}^{-1}(v)$ must be either one step above, below or to the left of $v$). We will prove that for
each $1\le i\le t$, $V_i(w)$ contains at most one vertex that is horizontally aligned with $u_{i-1}$ and, similarly, at most one vertex that is vertically aligned with $u_{i-1}$. This claim is
obvious for $i=t$, since $|V_t(w)|=1$. 

For $1\le i\le t-1$, suppose that there are two vertices $v,v'\in V_i(w)$ that are horizontally aligned with $u_{i-1}$. Then they both must be on the same side
(left or right) of $B$, depending on the position of $w$. This follows since a zombie cannot escape from the horizontal strip of dimensions $\lfloor K\log n\rfloor \times \lfloor (n-\lfloor K\log n\rfloor)/2 \rfloor$ to the left of $B$ and the same for the symmetric strip to the right of $B$; that is, ~$\nu_i$ maps these strips into themselves. Without loss of generality, we may assume that both $v$ and $v'$ are to the left of $B$. Also, $v$ and $v'$ must belong to the horizontal strip $S$ of
the same height of $B$ containing $B$. Suppose that $v$ and $v'$ are at distance $d$ from each other. Then $\nu_{i+1}(v)$ and $\nu_{i+1}(v')$ must be horizontally aligned, contained in $S$, to the
left of $B$ (or at $w$) and at the same distance $d$ from each other. Inductively, $\nu_t \circ \cdots \circ \nu_{i+2} \circ \nu_{i+1}(v) = w$ and $\nu_t \circ \cdots \circ \nu_{i+2} \circ
\nu_{i+1}(v') = w$ must also be at distance $d$ from each other, so $d=0$ and $v=v'$, as desired. An analogous argument shows that  $V_i(w)$ contains at most one vertex vertically aligned with
$u_{i-1}$. Therefore, for every $1\le i\le t$,
\[
|V_{i-1}(w)| = |{\nu_i}^{-1}(V_i(w))| \le |V_{i}(w)|+4,
\]
since every element in $V_i(w)$ has at most one preimage except for possibly two elements that have at most three. Hence,
$|V_{i}(w)| \le 4(t-i) +1$, and so $|V_{0}(w)| \le 4t+1$.

Let $V_0 = \bigcup_{w\in I} V_0(w),$ where the union is taken over the set $I$ of vertices of the inner boundary of $B$. Since there are at most $4K\log n$ choices for $w$ at the inner boundary of $B$, we have that $|V_0| \le (4t+1)4K\log n \le 20Kt\log n$. By construction, if the zombie
reaches the box $B$ at step $t$ for the first time, then his initial position $v_0$ must belong to $V_0$. This event happens with probability $|V_0|/n^2 \le 20Kt\log n/n^2$.
\end{proof}
\begin{lemma}\label{lem:usefultorus}
Consider $k= \lfloor \sqrt n/(\omega \log n) \rfloor$ zombies on $T_n$, for any given $\omega = \omega(n)$ that tends to infinity as $n\to \infty$. Assume that the survivor follows a $B$-boxed strategy during the time period $[0,4n]$. Then a.a.s.\ the
following hold:
\begin{enumerate}
\item[(i)]
there is no zombie in $B$ initially;
\item[(ii)]
all zombies arrive to $B$ within the first $3n$ steps; and
\item[(iii)]
no two zombies arrive at $B$ less than $M\log n$ steps apart, where $M=12K$.
\end{enumerate}
\end{lemma}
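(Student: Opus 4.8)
The plan is to establish each of the three parts (i), (ii), (iii) by a union bound over zombies (and over pairs of zombies for part (iii)), using Lemma~\ref{lem:arrivalB} as the workhorse estimate. The key observation throughout is that there are only $k = \lfloor \sqrt{n}/(\omega \log n)\rfloor$ zombies, which is a very small number compared to $n$, so even though each individual probability bound carries factors of $\log n$, summing over zombies and pairs leaves enough room to conclude that each bad event fails a.a.s.

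For part (i), I would note that the box $B$ contains at most $(\lfloor K\log n\rfloor)^2 \le K^2\log^2 n$ vertices, so a single zombie starts inside $B$ with probability at most $K^2\log^2 n / n^2$. Taking a union bound over all $k$ zombies gives a total probability at most $k \cdot K^2\log^2 n/n^2 = O(\log n/(\omega\sqrt{n}))$, which tends to $0$. For part (ii), I would apply Lemma~\ref{lem:arrivalB} with $t = 3n$: the probability that a fixed zombie, conditioned on being outside $B$, still has not arrived at $B$ by step $3n$ needs to be controlled. Here I would instead sum the per-step arrival bound $20Kt\log n/n^2$ over $t \in [1, 3n]$ to bound the probability that a given zombie arrives during $[1,3n]$, and separately argue that the complementary event (never arriving in $3n$ steps) is unlikely; the cleanest route is to show directly that a zombie outside $B$ must reach $B$ within $3n$ steps with high probability because the zombie moves monotonically toward the survivor each step and the diameter of $T_n$ is about $n$, so after roughly $2n$ steps the zombie is essentially on top of the survivor's region $B$. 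I would then take a union bound over the $k$ zombies.

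For part (iii), which I expect to be the \textbf{main obstacle}, I would fix a pair of distinct zombies and a pair of arrival times $t_1, t_2$ with $|t_1 - t_2| < M\log n$, and bound the probability that the two zombies arrive at $B$ at those respective steps. Using Lemma~\ref{lem:arrivalB} for each zombie \emph{independently} (the zombies choose their strategies independently, and the lemma's bound holds after fixing all other zombies' strategies), the probability that zombie~$a$ arrives at step $t_1$ is at most $20Kt_1\log n/n^2 \le 60Kn\log n/n^2 = 60K\log n/n$, and similarly for zombie~$b$ at step $t_2$. Multiplying the two independent estimates gives roughly $(60K\log n/n)^2$ for a fixed pair $(t_1,t_2)$. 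I then sum over the $\binom{k}{2} = O(k^2)$ pairs of zombies, over the $O(n)$ choices of $t_1$, and over the $O(\log n)$ values of $t_2$ within distance $M\log n$ of $t_1$. The total is of order
\[
k^2 \cdot n \cdot M\log n \cdot \left(\frac{60K\log n}{n}\right)^2
= O\!\left(\frac{k^2 \log^3 n}{n}\right).
\]
Substituting $k = O(\sqrt{n}/(\omega\log n))$ yields $O(\log n/\omega^2)$, which tends to $0$ since $\omega \to \infty$. The delicate point I would need to handle carefully is the independence claim: Lemma~\ref{lem:arrivalB} fixes the strategies of all zombies except one, so to multiply the two arrival probabilities I must condition on the strategies of the remaining $k-2$ zombies, apply the lemma to zombie~$a$ with zombie~$b$'s strategy fixed, and then apply it to zombie~$b$; because distinct zombies choose their strategies independently, this product bound is valid. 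I would also need to verify that the survivor's $B$-boxed property during $[0,4n]$ guarantees the survivor's relevant trajectory is well-defined and independent of the two distinguished zombies up to their arrival times, exactly as in the proof of Lemma~\ref{lem:arrivalB}. Combining parts (i)--(iii) by a final union bound completes the proof.
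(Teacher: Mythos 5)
There are two genuine gaps, one quantitative and one structural. The quantitative one is in your part~(iii): your final estimate $k^2\cdot n\cdot M\log n\cdot(60K\log n/n)^2=O(k^2\log^3 n/n)=O(\log n/\omega^2)$ does \emph{not} tend to $0$, because $\omega$ is an arbitrary function tending to infinity --- take $\omega=\log\log n$ and your bound diverges. The loss comes from charging the first zombie's arrival probability and summing it over all $t_1\in[1,3n]$: that sum, $\sum_{t_1\le 3n}20Kt_1\log n/n^2=\Theta(\log n)$, is strictly worse than the trivial bound of $1$ for the event ``the first zombie arrives at some step.'' The paper avoids this by fixing all strategies except the second zombie's, so that the first zombie's arrival step $t$ is determined, and then charging only the second zombie's probability of landing in the window $[t-M\log n,\,t+M\log n]$; by Lemma~\ref{lem:arrivalB} this is $(2M\log n+1)\cdot O(\log n/n)=O(\log^2 n/n)$ per pair, and a union bound over $\binom{k}{2}=o(n/\log^2 n)$ pairs gives $O(1/\omega^2)=o(1)$. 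Your independence bookkeeping (conditioning on the other $k-2$ strategies and using the uniformity of Lemma~\ref{lem:arrivalB}) is sound, but the extra factor $\log n$ from paying for \emph{both} arrival probabilities breaks the argument exactly in the regime of slowly growing $\omega$ that the lemma must cover.

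The structural gap is in your part~(ii). The assertion that a zombie ``moves monotonically toward the survivor each step'' and the diameter is about $n$, hence arrives within roughly $2n$ steps, is not a proof: each zombie move decreases the distance by $1$, but the survivor then moves and can restore it, so the distance is merely non-increasing --- indeed the entire point of Theorem~\ref{thm:torus} is that zombies can chase forever without ever closing the gap. It is the confinement of the survivor to $B$ that forces arrival, and that requires an actual argument. (Your other suggestion, summing the per-step bound $20Kt\log n/n^2$ over $t\in[1,3n]$, is vacuous since that sum is $\Theta(\log n)>1$, and in any case it bounds the wrong event.) The paper's argument is: for each zombie and each direction $\alpha\in\{\tL,\tR,\tU,\tD\}$, let $X_\alpha$ count the steps $i\in[1,3n]$ in which $\sigma_i$ has $\alpha$ as its first symbol; each $X_\alpha\sim\Bin(3n,1/4)$, and if $X_\tL,X_\tR,X_\tU,X_\tD>n/2$ then the zombie reaches $B$ within $3n$ steps \emph{deterministically}, regardless of its starting vertex; Chernoff's bound gives $\Prob(X_\alpha\le n/2)\le e^{-n/24}$, and a union bound over the $k$ zombies and four directions finishes part~(ii). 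Your part~(i) is correct and coincides with the paper's first-moment computation, so the proposal needs repair only in (ii) and (iii).
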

\begin{proof}
The expected number of zombies in $B$ at the initial step is
\[
k|B|/n^2 \le \frac{K^2 \sqrt n \log^2n}{\omega n^2 \log n} = o(1),
\]
so by Markov's inequality, part~(i) holds.

\bigskip

Given a zombie with (random) zombie strategy $(v_0,\bsigma)$, let $X_\tL$ be the number of steps $i\in[1,3n]$ such that $\sigma_i$ has $\tL$ as the first symbol in the permutation. Define
$X_\tR,X_\tU,X_\tD$ analogously. Observe that if $X_\tL,X_\tR,X_\tU,X_\tD>n/2$, then the zombie must reach $B$ within the first $3n$ steps, deterministically and regardless of his initial position
on $T_n$. Each of the random variables $X_\alpha$ ($\alpha\in\{\tL,\tR,\tU,\tD\}$) is distributed as $\Bin(3n,1/4)$. Therefore, by Chernoff's bound,
\[
\Prob(X_\alpha \le n/2) \le e^{- n/24}
\]
By taking a union bound over all zombies and $\alpha\in\{\tL,\tR,\tU,\tD\}$, we conclude that a.a.s.\ for every zombie $X_\tL,X_\tR,X_\tU,X_\tD>n/2$. Consequently, a.a.s.\ all zombies must reach
$B$ within the first $3n$ steps, and the proof of part~(ii) is finished.

\bigskip

In order to prove part~(iii), we consider two zombies. Suppose that the first zombie is initially not in $B$ and reaches $B$ at step $t\in[1,3n]$. By Lemma~\ref{lem:arrivalB}, the probability that
the second zombie arrives at $B$ at step $t'\in[t-M\log n,t+M\log n]$ is $O(\log^2 n/n)$. Therefore, the probability that a pair of zombies arrive at $B$ within the first $3n$ steps and less than
$M\log n$ steps apart is $O(\log^2 n/n)$. Taking a union bound over the number $\binom{k}{2} = o(n/\log^2n)$ of pairs of zombies and in view of part~(ii), we conclude the proof of part~(iii).
\end{proof}
A zombie strategy $(v_0,\bsigma)$ is called \emph{regular} if, for any direction $\alpha\in\{\tL,\tR,\tU,\tD\}$ and any interval of consecutive steps $I \subseteq [1,4n]$ of length $ \lfloor 20\log n \rfloor $, there is a subset of
steps $J\subseteq I$ (not necessarily consecutive) with $|J|= \lceil \log n \rceil$ such that, for every $i\in J$, $\sigma_i$ has $\alpha$ as the first symbol in the permutation. Informally, for every
$ \lfloor 20\log n \rfloor $ consecutive steps in $[1,4n]$, there are at least $\log n$ steps in which the zombie ``tries'' to move in the direction of $\alpha$ if that decreases the distance to the survivor.
\begin{lemma}\label{lem:regulartorus}
Consider $k=\sqrt n/(\omega \log n)$ zombies on $T_n$, for any given $\omega = \omega(n)$ that tends to infinity as $n\to \infty$. Then a.a.s.\ every zombie has a regular zombie strategy.
\end{lemma}
\begin{proof}
Given a zombie, a symbol $\alpha\in\{\tL,\tR,\tU,\tD\}$ and an interval of steps $I$ of length $\lfloor 20\log n \rfloor$, the number of steps $i$  in $I$ such that $\sigma_i$ ($i\in I$) has $\alpha$ as its first
symbol is distributed as $\Bin(\lfloor 20\log n \rfloor,1/4)$. Hence, by Chernoff's bound, the probability of having less than $\log n$ of such steps is at most
\[
e^{-((1+o(1))/2)(4/5)^25\log n} = n^{-8/5+o(1)} = o(n^{-3/2}).
\]
The proof of the lemma follows by taking a union bound over all $k=o(n^{1/2})$ zombies, all $O(n)$ choices of $I$ and all $\alpha\in\{\tL,\tR,\tU,\tD\}$.
\end{proof}
Given a trajectory $\bu=(u_t)_{t\in[a,b]}$, any integer $a<j<b$ such that $u_j-u_{j-1} \ne u_{j+1}-u_j$ is called a \emph{turning point} (that is, the direction of the trajectory changes at time-step
$j+1$). A turning point $j$ is \emph{proper} if additionally $u_j-u_{j-1} \ne - (u_{j+1}-u_j)$ (that is, the trajectory does not turn $180^\circ$). For convenience, we also consider the first and last
indices $a$ and $b$ of the trajectory to be proper turning points. We call a trajectory to be \emph{stable} if all its turning points are proper and, for any two different turning points $j$ and
$j'$, we have $j-j'\ge \lfloor 20\log n \rfloor $.

\begin{lemma}\label{lem:followtorus}
Suppose that a zombie has a fixed regular zombie strategy $(v_0,\bsigma)$ and that the survivor follows a stable trajectory $\bu$ during the time interval $[a,b]$. Let $\bv$ be the trajectory of the
zombie (determined by its strategy and the survivor's trajectory). Suppose moreover that $v_a$ and $u_a$ are at distance $d\in\{2,3\}$ and that $v_{a+1}$ and $u_{a+1}$ are also at distance $d$ (that
is, the first move of the survivor is not towards the zombie). Then, deterministically, $v_t$ and $u_t$ are at distance $d$ for all $t\in[a,b]$.
\end{lemma}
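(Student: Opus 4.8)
The plan is to pass to a local planar picture and to track the signed coordinate-wise displacement from the zombie to the survivor, showing that the zombie always stays ``behind'' the survivor along each straight stretch of $\bu$ and becomes perfectly aligned with it before every turn. First I would record a reduction. At each step the zombie decreases its distance to the survivor by exactly $1$ (the distance is at least $d\ge2$, so some direction is always good), whereas the survivor's single move changes the distance by at most $1$. Hence $\mathrm{dist}(u_t,v_t)$ is non-increasing, so it stays $\le d\le 3<n/2$ for all $t\in[a,b]$. In particular there is never any wrap-around ambiguity, and we may work in a local chart $\cong\Z^2$ with the $\ell_1$ metric, where ``moving towards'' and ``moving away'' are unambiguous. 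Write $u_t-v_t=(R_t,S_t)$, so that $|R_t|+|S_t|=\mathrm{dist}(u_t,v_t)$.

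Since $\bu$ is stable, $[a,b]$ decomposes into maximal straight runs delimited by consecutive proper turning points, each of length at least $\lfloor 20\log n\rfloor$, and no run reverses the previous direction. Consider, without loss of generality, a run during which the survivor plays $\tR$ at every step, and call $R$ the \emph{longitudinal} and $S$ the \emph{transverse} coordinate. I would first establish the invariant that $R\ge0$ at the start of each step (the zombie stays weakly behind). Indeed, the only way $R$ decreases is a good $\tR$-move by the zombie, which requires $R>0$ beforehand, so $R$ is still $\ge0$ after the zombie's move; the survivor's $\tR$-step then raises $R$ to a value $\ge1$, propagating the invariant. Because $R\ge0$ after the zombie's move, that $\tR$-step strictly increases the distance, restoring it to $d$. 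The base case is precisely the hypothesis that the first move is not towards the zombie, which forces $R\ge0$ after the zombie's first move. Thus the distance is maintained throughout the run, irrespective of the zombie's priorities.

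The role of regularity is to force alignment. Along the run $|S_t|$ is non-increasing and drops by $1$ exactly when the zombie moves transversally, i.e.\ plays $\tU$ (if $S>0$) or $\tD$ (if $S<0$). By regularity, within the first $\lfloor 20\log n\rfloor$ steps of the run the required transverse symbol is the first entry of $\sigma_i$ for at least $\lceil\log n\rceil$ steps; on each such step, as long as $S\ne0$ that symbol is a good move and is tried first, so the zombie takes it. Since $|S|\le d\le 3<\lceil\log n\rceil$, the transverse gap closes well before the run ends, after which the only good move is $\tR$ and the displacement stays equal to $(d,0)$: the zombie sits directly behind the survivor.

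Finally, at the turning point the survivor makes a proper $90^\circ$ turn, say to $\tU$; from $u-v=(d,0)$ the zombie's unique good move is $\tR$, giving $(d-1,0)$, and the survivor's $\tU$-step yields $(d-1,1)$, of distance $d$ (symmetrically for a $\tD$-turn). Thus the distance survives the turn, and the next run begins with transverse offset $d-1\ge0$ and longitudinal offset $1>0$ — exactly the configuration (first move not towards the zombie) needed to rerun the argument. Induction over the finitely many runs of $[a,b]$ then finishes the proof. The crux is the alignment step: one must guarantee that the transverse gap is closed before the next turn occurs, which is precisely where regularity (at least $\lceil\log n\rceil$ attempts per window) meets stability (runs of length $\ge\lfloor 20\log n\rfloor\gg d$, together with the absence of $180^\circ$ turns, which would otherwise send the survivor straight into the zombie).
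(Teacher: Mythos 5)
Your proposal is correct and takes essentially the same route as the paper's proof: decompose the stable trajectory into straight runs between consecutive proper turning points, use regularity of the zombie strategy to force the zombie into alignment directly behind the survivor within the first $\lfloor 20\log n\rfloor$ steps of each run (since the transverse offset is at most $3<\lceil \log n\rceil$), note that properness of the turn then keeps the survivor's change of direction away from the zombie, and induct over the runs. Your signed-displacement invariant $R\ge 0$ simply makes rigorous the step the paper dismisses as ``clear'' (that the distance stays constant along a straight run once the first move is away from the zombie), and your explicit computation $(d,0)\to(d-1,0)\to(d-1,1)$ at a turning point replaces the paper's case enumeration of the zombie's aligned/above/below positions.
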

\begin{proof}
Let $a'>a$ be the first turning point in $[a,b]$ after $a$. It is clear that, since the survivor is not changing direction between $u_a$ and $u_{a'}$ and he is not going towards the zombie, the
distance between $u_t$ and $v_t$ stays constant for all $t\in[a,a']$. If $a'=b$, then we are done. Otherwise, if $a'<b$, then we need to guarantee that the survivor does not move towards the zombie at time-step $a'+1$.

Without loss of generality, we may assume that $u_a=(0,0)$ and $u_{a+1}=(1,0)$ (that is, the survivor moves to the right between time-steps $a+1$ and $a'$). If the zombie is initially horizontally aligned with the survivor (that
is, ~$v_0=(-2,0)$ or $v_0=(-3,0)$), then it must stay so between time-steps $a+1$ and $a'$. Then the survivor will move away from the zombie at time-step $a'+1$ as well, regardless of the survivor's
choice of new direction, since the turning point $a'$ is proper. Otherwise, if the zombie is above the survivor at time-step $a$ (that is, $v_0 \in \{(-2,1), (-1,2), (0,3), (-1,1), (0,2)\}$),
then it must become horizontally aligned with them before they reach $u_{a'}$ since, by stability of the survivor's strategy, $a' - a \geq \lfloor 20\log n \rfloor$, and also due to the regularity of the zombie strategy, the zombie will have $\tD$ as a priority direction move for at least
$\log n>3$ time-steps in that time period. The case in which the zombie is initially below the survivor is treated analogously. 

Summarising the arguments, in either case the zombie is horizontally
aligned with the survivor before time-step $a'$, and thus, the survivor can safely change direction at time-step $a'+1$ without moving towards the zombie. Finally, we can inductively repeat the argument and prove the statement for the whole time period $[a,b]$.
\end{proof}
\begin{proof}[Proof of Theorem~\ref{thm:torus}]
We will describe a strategy for the survivor during the time period $[0,4n]$ that a.a.s.\ succeeds at attracting all the zombies to two vertices at distances $2$ and $3$ from the survivor's position
and on the same side of that position. Once that is achieved, the survivor can simply keep moving in a straight line around the toroidal grid, staying away from all the zombies forever; see Figure~\ref{fig:forever}.
\begin{figure}
\includegraphics{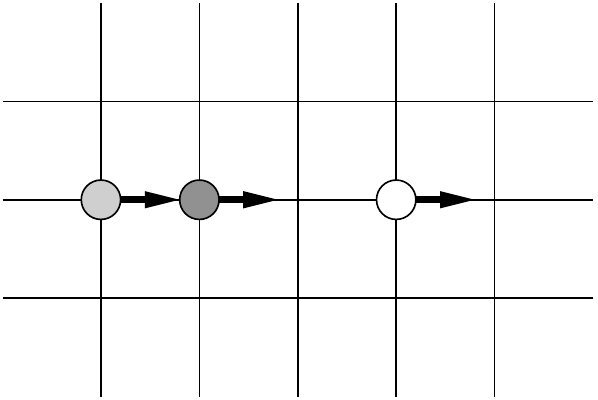}
\caption{If at some point the survivor is on the white vertex and all the zombies are at distance $2$ or $3$ on the grey vertices, and the survivor is moving away from the zombies,
then he can keep the same direction forever on $T_n$ and survive.}
\label{fig:forever}
\end{figure}

Let $C$ be a smaller $ \lfloor 20\log n \rfloor  \times \lfloor 20\log n \rfloor $ box centered at the center of box $B$. The survivor starts at the top left corner $u_0$ of $C$, and will always follow a stable trajectory $\bu$
during the time period $[0,4n]$ inside $B$. The survivor's decisions regarding what trajectory to follow, will depend on the positions of the zombies inside $B$, but not on those outside of $B$. Therefore, the
survivor strategy is $B$-boxed during the time period $[0,4n]$.

In our description of the survivor's strategy, we will only consider situations that are achievable assuming that the conclusions of Lemma~\ref{lem:usefultorus} and Lemma~\ref{lem:regulartorus} hold. That is, we assume that initially there is no zombie in $B$; they all arrive at $B$ within the first $3n$ steps; no two zombies arrive at $B$ less than $M\log n$ steps apart; and all zombies have
regular strategies. If at some step the survivor has to face a situation not covered by our description (because, for instance, two zombies arrived at $B$ at the same time), then he gives up and simply defaults to
any arbitrary fixed $B$-boxed strategy, ignoring the zombies' behaviour from then on. As a result the survivor will probably be eaten but, fortunately, this situation does not happen a.a.s.

The survivor starts at the top left corner $u_0$ and starts going in circles clockwise around $C$ until the time a first zombie arrives at $B$. Let $a$ be the time this situation occurs. The survivor keeps
going in circles around $C$ until the zombie is at distance  less or equal to $542\log n$. From our assumption on the regularity of zombies' strategies, this takes at most $  (K/2) \lfloor 20 \log n \rfloor \le  10K\log n$ steps from time $a$ (since
there will be at least $(K/2) \log n$ steps among those in which the zombie tries to move in the direction of $\alpha$, for each $\alpha\in\{\tL,\tR,\tU,\tD\}$). Then the survivor keeps going until
the next corner $u$ of $C$. At that point the zombie is at distance between $542\log n$ and $500\log n$ from him. The survivor makes that corner $u$ his next proper turning point in his
trajectory, and changes the direction in a way that he is not moving towards the zombie. The survivor can always do so by choosing between a $90^\circ$ or a $-90^\circ$ turn.  (Notice that he might leave $C$ at $u$.)

Without loss of generality, we may suppose that
this direction is to the right (the description of his strategy in any other case is analogous). The survivor keeps moving right for $ 1000 \lfloor 20 \log n \rfloor \le  2 \cdot 10^4 \log n$ time-steps. During those steps, the zombie gets
horizontally aligned with the survivor (and it is still at the same distance), since its zombie strategy is regular and at least $1000\log n$ of those steps decrease the vertical distance between the
two individuals. Then the survivor goes down for $ \lfloor 20\log n \rfloor $ steps, left for $ \lfloor 20\log n \rfloor $ steps and up again for $ \lfloor 20\log n \rfloor $ steps. The zombie is still to the left of the survivor (at horizontal
distance of between $440\log n$ and $542\log n$) and either horizontally aligned or below (at vertical distance between $0$ and $ \lfloor 20\log n \rfloor $). Next, the survivor moves to the left $ 20\lfloor 20\log n \rfloor \le 400\log n$ steps.
Both individuals must now be horizontally aligned and at distance between $40\log n$ and $143\log n$. The survivor keeps on moving left until he is at distance $2$ or $3$ from the zombie. Let $b$
be the time-step when this happens. Finally, the survivor moves down for $ \lfloor 20\log n \rfloor$ steps, and then moves back the top left corner $u_0$ of $C$ using a stable trajectory. He chooses the shortest
stable trajectory that allows the survivor to reach $u_0$ by a step up. Let $c$ be the time he gets back at $u_0$. Then he resumes the strategy of going around $C$ clockwise until the next zombie arrives to $B$.
See Figure~\ref{fig:stablestrategy} for a visual representation of the above description.

\begin{figure}
\includegraphics{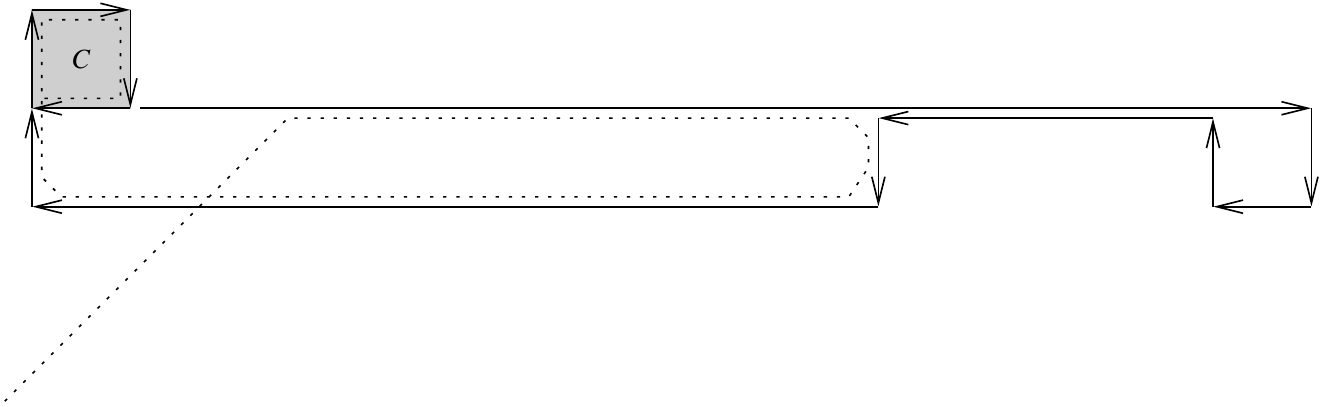}
\caption{Approximate depiction of the survivor's strategy when a new zombie approaches. The black arrows describe the trajectory of the survivor and the dotted lines the trajectory of the zombie.}
\label{fig:stablestrategy}
\end{figure}

Note that the survivor's trajectory described so far is stable, since all the turning points are proper and are at least $\lfloor 20\log n \rfloor$ steps apart. Further, the survivor move down at step $b+1$ is
not towards the zombie (which is at time $b$ horizontally aligned with the survivor and to his left). Therefore, by Lemma~\ref{lem:followtorus}, as long as the survivor maintains a stable
trajectory over the whole time period $[0,4n]$, then the trajectory of this zombie will keep a constant distance (either $2$ or $3$) to the survivor's trajectory during all steps in $[b,4n]$. Also, observe
that the whole process between time $a$ and time $c$ takes at most $11K\log n < M\log n$ steps, so by assumption there was no other zombie in $B$ during that period. Moreover, the survivor's strategy
does not depend on zombies outside of $B$ and, in spite of his long excursion of around $2 \cdot 10^4 \log n$ steps to the right from one corner of $C$, the survivor never abandons the box $B$, as required by the
definition of $B$-boxed strategy.

The survivor can proceed analogously each time a new zombie arrives to $B$, ignoring all zombies that are already at distance $2$ or $3$ from them. By construction, this defines a $B$-boxed strategy during the time period $[0,4n]$ (recall that for any configuration not covered by our previous description, the survivor just adopts any default $B$-strategy). Moreover, if all our
assumptions on the zombies hold (which occurs a.a.s.~by Lemmas~\ref{lem:usefultorus} and~\ref{lem:regulartorus}), the survivor will follow a stable trajectory during the time period $[0,4n]$ with
the following properties: at step $4n$ all zombies are at distance $2$ or $3$ from the survivor, and the survivor is moving away from all of them. Then from that moment on, the survivor can keep
going in the same direction and survive forever (deterministically). The proof of the theorem is finished.
\end{proof}

A sub-quadratic upper bound for the zombie number of toroidal grids remains open. We plan to consider this problem, and the zombie number of other grid graphs, in future work.

\end{document}